\documentclass[reqno,a4paper,11pt]{amsart}
\usepackage{enumerate,amsmath,amssymb,stmaryrd}
\setlength{\textwidth}{6in}
\setlength{\oddsidemargin}{.2in}
\setlength{\evensidemargin}{.2in}
\setlength{\textheight}{9.25in}
\setlength{\topmargin}{-.05in}

\usepackage[mathscr]{eucal}
\usepackage{graphicx}
\usepackage{color}
\usepackage{multicol}
\usepackage{tikz}
\usepackage{xypic}
\usepackage{caption}
\usepackage{subcaption}
\usepackage{xcolor}
\usepackage{comment}
\usepackage{esint}
\usetikzlibrary{calc}
\usepackage{tikz-cd}
\usepackage{cancel}
\definecolor{bello}{RGB}{130, 45, 220} 
\usepackage[scr=esstix]   
           {mathalpha}
\usepackage[colorlinks=true, allcolors=black]{hyperref}

\newtheorem{theorem}{Theorem}[section]
\newtheorem{conj}[theorem]{Conjecture}
\newtheorem{lemma}[theorem]{Lemma}

\newtheorem{corollary}[theorem]{Corollary}
\theoremstyle{definition}
\newtheorem{definition}{Definition}[section]

\theoremstyle{remark}
\newtheorem*{remark}{Remark}


\newcommand{\red}{\textcolor{red}}

\renewcommand{\S}{\Sigma}
\renewcommand{\l}{\lambda}

\newcommand{\p}{\partial}
\newcommand{\R}{\mathbb{R}}

\newcommand{\tf}{\mathring}

\newcommand\Ric{{\rm Ric}}
\newcommand\tr{{\rm tr}}
\newcommand\dv{{\rm div}}

\newcommand{\LY}{\rm LY}
\newcommand{\BY}{\rm BY}
\renewcommand{\a}{\alpha}

\makeatletter
\def\l@subsection{\@tocline{2}{0pt}{2.5em}{5em}{}}
\makeatother

\begin{document}

\title[On energy and its positivity with an expanding flat de Sitter background]{On energy and its positivity in spacetimes with an expanding flat de Sitter background
}

\author{Rodrigo Avalos}
\address[Rodrigo Avalos]{Mathematics Department, Eberhard Karls Universit\"{a}t T\"{u}bingen}
\email{rodrigo.avalos@mnf.uni-tuebingen.de}

\author{Eric Ling}
\address[Eric Ling]{Copenhagen Centre for Geometry and Topology (GeoTop), Department of Mathematical Sciences, University of Copenhagen, Denmark
}
\email{el@math.ku.dk}

\author{Annachiara Piubello$^{\ast}$}\thanks{$^{\ast}$Author to whom any correspondence should be addressed.}
\address[Annachiara Piubello]{Copenhagen Centre for Geometry and Topology (GeoTop), Department of Mathematical Sciences, University of Copenhagen, Denmark
}
\email{anpi@math.ku.dk}

\begin{abstract}
The positive energy theorems are a fundamental pillar in mathematical general relativity. Originally proved by Schoen--Yau and later Witten, these theorems were established for asymptotically flat manifolds where the metric tends to the standard Euclidean metric and whose second fundamental form decays to zero at infinity. This ansatz on the metric and second fundamental form is motivated by the desire to model an isolated gravitational system with a Minkowski space background for the spacetime. However, actual astrophysical massive objects are not truly isolated but rather exist within an expanding cosmological universe, where the second fundamental form is umbilic. With this in mind, we seek a notion of energy for initial data sets with an umbilic second fundamental form.
In this work, we present a definition of energy in such an expanding cosmological setting. Instead of Minkowski space, we take de Sitter space as the background spacetime, which, when written in flat-expanding coordinates, is foliated by umbilic hypersurfaces each isometric to Euclidean 3-space. This cosmological setting necessitates a quasi-local energy definition, as the presence of a cosmological horizon in de Sitter space obstructs a global one. We define energy in this quasi-local setting by adapting the Liu--Yau energy to our framework and establish positivity of this energy for certain bounded values of the cosmological constant.
\end{abstract}


\maketitle



\section{Introduction}

The question we want to address is how to describe energy in isolated gravitational systems within an expanding universe. In physics, isolated gravitational systems are meant to represent localized matter sources which generate a gravitational field decaying as one moves further away from these sources. Traditionally, such systems have been described in terms of spacetimes which, in some meaningful sense, asymptote to Minkowski space $\R^{3,1}$ as one moves to space-like infinity. This can be captured by considering solutions to the Einstein equations arising by the evolution of initial data which asymptote to the canonical initial data for $\R^{3,1}$, i.e., a totally geodesic spacelike hyperplane, as we move to space-like infinity. The physical intuition behind this choice arises by considering $\R^{3,1}$ as the \emph{vacuum} solution to the Einstein equations that one would approach by moving continuously to exceedingly weaker gravitational fields created by a localized matter distribution. Nevertheless, observational evidence of the accelerated expansion of the universe strongly supports a cosmological model with a positive cosmological constant $ \Lambda > 0$, where the corresponding maximally symmetric cosmological $\Lambda$-vacuum solution to the Einstein equations is not given by $\mathbb{M}^{4}$, but by de Sitter space. Thus, as we move further away from a localized energy density, one should conjecture that the corresponding spacetime should approach a de Sitter background solution. In this paper, taking this point of view, we aim to present a first candidate for energy in this expanding setting and prove its positivity in some scenarios.

Let us highlight that, in the traditional viewpoint concerning asymptotically Mikowskian systems as models for an isolated system, asymptotic symmetries of $\R^{3,1}$ provide natural tools to propose a notion of energy attached to such a system. This is because, although generic spacetimes possess no symmetries, 
the timelike Killing fields in the asymptotic background of $\R^{3,1}$ can be used to produce such quantities through well-known methods \cite{ChoquetBruhat1984PositiveEnergy, Chrusciel2012, WaldZoupas2000ConservedQuantities}. In the context of de Sitter space, there are no global timelike Killing fields, producing a challenge if one intendeds to follow a similar approach and identify a conserved quantity canonically related to a time-translational background symmetry. Moreover, it is not clear that within an expanding universe with fixed cosmological constant one should expect such a conservation law: On the one hand, matter should get increasingly diluted while on the other hand any contribution from $\Lambda$ remains constant as a density while space expands. These and other challenges motivate us to propose instead a \emph{quasi-local} notion of energy, attaching such a notion to bounded subsets within the cosmological horizon of de Sitter space, instead of a global one, which may be typically guided by such conservation principles. We take this to be a promising first avenue in a program related to better defining isolated systems and their energy in expanding cosmological spacetimes.

Given the above comments, we point out that it might be an interesting related question to understand if a natural global notion of energy could be introduced by considering the background global \emph{conformal} timelike Killing fields of de Sitter space. In particular it is of interest to investigate whether conformal Killing fields could introduce monotonicity laws addressing the subtleties outlined above. Pursuing this alternative is outside the scope of our current work, and inspiration could be drawn from \cite{ChoquetBruhat1984PositiveEnergy, ALM}. If such a program could be successfully carried out, it would then be interesting to understand its relation to our quasi-local definition we put forth.

The structure of the paper is as follows. In Section \ref{sect: energy def} we define our quasi-local notion of energy with an expanding flat de Sitter background, see Definition \ref{def: E_lambda}.
In Section \ref{sect: main result} we state the main conjecture and our main results, Theorem \ref{thm: main} and Corollary \ref{cor: riem}.
Section \ref{sect:quasi-local-results} reviews the relevant notions of quasi-local energy developed in a Minkowskian background, together with the associated positivity results.
In Section \ref{sect: proofs} we provide the relevant lemmas and prove our main results.
Section \ref{sect: example} explores examples of our results. Section \ref{subsect: SchwDeS} reviews the Schwarzschild--de Sitter space in flat expanding coordinates and computes the quantities appearing in Theorem \ref{thm: main}. Section \ref{subsect: SchoenEx} uses a blow-up argument of Schoen to construct many umbilic initial data sets  satisfying the $\Lambda$-vacuum constraint equations \eqref{eq: constraint} where our results apply.
Finally, in Section \ref{sect: conclusions} we summarize our findings and discuss future directions.

\section{Defining energy with an expanding flat de Sitter background}\label{sect: energy def}

An initial data set is a triple $(M,g,k)$, where $(M,g)$ is a Riemannian 3-manifold and $k$ is a symmetric $(0,2)$ tensor on $M$. Two-sided spacelike hypersurfaces in a spacetime $(\mathscr{M}, \mathscr{g})$ are examples of initial data sets; in this case $k$ is the second fundamental form of $M$ within $\mathscr{M}$. In particular, if $u$ is the future-directed unit timelike vector field, then, in our convention,
\begin{equation}\label{eq: defi k}
    k(X,Y) = \langle \nabla_{X}u, Y \rangle,
\end{equation}
where $X, Y$ are tangent vector fields on $M$ and $\nabla$ is the spacetime Levi-Civita connection. With this convention, positive mean curvature of the initial data set corresponds to positive expansion towards the future.


Given $\Lambda \in \R$ (the cosmological constant), we define the \emph{energy density} $\rho$ and momentum density $J$ via,
\begin{equation}\label{eq: constraint}
\begin{aligned}
\rho + \Lambda &= \frac{1}{2}\bigg(R_g + (\tr_g k)^2 - |k|_g^2\bigg), \\
J &= \dv_g \left[ k - (\tr_g k) g \right],
\end{aligned}
\end{equation}
where $ R_g $ is the scalar curvature of the induced metric $ g $. We refer to these equations as the constraint equations; they arise from the Gauss--Codazzi decomposition of the Einstein field equations. 

An initial data set satisfies the \textit{dominant energy condition with respect to $\Lambda$} if the inequality 
 \begin{equation}
     \rho \geq |J|_g 
 \end{equation}
holds. We remark that the dominant energy condition used here is derived from the spacetime dominant energy condition applied to the stress-energy tensor 
$\mathscr{T} = \mathcal{G} + \Lambda \mathscr{g},$ where $\mathcal{G}$ is the Einstein tensor associated with the spacetime metric $\mathscr{g}$. 
Specifically, we require that for every future-directed timelike vector 
$\mathscr{v}$, the vector $-\mathscr{T}(\mathscr{v}, \cdot)^{\sharp}$
is also future-directed timelike. Consequently, this dominant energy 
condition depends explicitly on $\Lambda$.
This condition ensures that energy density dominates momentum flux, reflecting the physical requirement that energy-momentum flows do not exceed the speed of light.

If $\Lambda \geq 0$ and the dominant energy condition with $\Lambda$ holds, then the \emph{geometrical energy density} $\mu$ satisfies
\begin{equation}\label{eq: mu def}
\mu := \rho + \Lambda \geq |J|_g.    
\end{equation}
Note that this is the dominant energy condition without $\Lambda$. Moreover, $\mu$ is determined entirely by the geometry of the initial data:
\begin{equation}\label{eq: mu def 2}
\mu = \frac{1}{2}\bigg(R_g + (\tr_g k)^2 - |k|_g^2\bigg).
\end{equation}

Noether’s theorem (see \cite{arnold1989}), loosely speaking, states that conserved quantities correspond to symmetries of the system. Therefore, to define conserved quantities such as energy in general relativity, we look for global symmetries of the spacetime. These
symmetries are generated by Killing vector fields, which are vector fields that preserve the metric along their flow (i.e. they generate isometries of the spacetime). In particular, a timelike Killing vector field corresponds to a time-translation symmetry and, by Noether’s theorem, gives rise to a conserved energy.
However, general spacetimes do not possess such global symmetries.
This is why asymptotically Minkowskian spacetimes are of particular interest: in these spacetimes the geometry approaches Minkowski space at infinity, where a timelike Killing vector field exists. This asymptotic symmetry provides the structure needed to define a conserved energy at infinity.

The ADM energy \cite{ADM1961} $ E_{\mathrm{ADM}} $ plays a central role in the analysis of asymptotically flat spacetimes. It is derived from the Hamiltonian formulation of general relativity under the assumption of a fixed Minkowski space background. It provides a notion of total energy for an isolated gravitational system and is invariant under asymptotically flat coordinate transformations \cite{bartnik86, Chrusciel1988}. This invariance guarantees that the energy is well-defined independently of the observer’s coordinate choice at infinity.

The ADM energy is defined for initial data sets $ (M, g, k) $ that satisfy specific fall-off conditions at spatial infinity, namely:
\begin{equation}
g_{ij} = \delta_{ij} + O_2(|x|^{-\tau}), \qquad k_{ij} = O_1(|x|^{-1-\tau}),    
\end{equation}
where $\tau > \frac12$. These asymptotic conditions ensure that the metric $ g $ approaches the flat Euclidean metric $\delta$ and that the extrinsic curvature $ k $ decays appropriately. Assuming integrability of $\mu$ and $J$, the energy is defined as
\begin{equation}
    E_{\mathrm{ADM}}= \frac{1}{16 \pi} \lim_{r \to \infty} \int_{S_r} \sum_{i,j} \left(\partial_i  g_{ij} - \partial_j  g_{ii}\right) \frac{x_j}{r} \, d\mu_\delta,
\end{equation}
where $d\mu_\delta$ denotes the area measure on the coordinate sphere $S_r=\{|\mathbf{x}| = r\}$ induced by the Euclidean metric $\delta$. We remark that in this setting the linear momentum and ADM mass are also be well defined.

A fundamental result in mathematical relativity is the \textit{Positive Energy Theorem}. It was first proved for the time symmetric case $k\equiv 0$ independently by Schoen and Yau \cite{SY79} and Witten \cite{witten1981new}. It was later extended for the case of a general $k$ with appropriate asymptotic decay by Schoen and Yau \cite{SY81a,SY81b}. The theorem states that, under the assumption that the dominant energy condition holds, the ADM energy is non-negative:
\begin{equation}
E_{\mathrm{ADM}} \geq 0.
\end{equation}
Equality occurs if and only if the initial data set $ (M, g, k) $ arises from a spacelike slice of flat Minkowski spacetime.

Extensions of the positive energy theorem exist for other asymptotic geometries. In the case $\Lambda < 0$, corresponding to asymptotically hyperbolic manifolds, the notion of energy was introduced by Wang \cite{Wang01} and further developed by Chruściel and Herzlich \cite{ChruscielHerzlich03}. The Riemannian positive mass theorem in this setting was proven under the assumption that the manifold is spin in \cite{Wang01, ChruscielHerzlich03}. The positivity of the energy in the hyperboloidal, non-spin case has been established by Sakovich (see \cite{sakovich2021} and references therein).

When $\Lambda>0$, the situation becomes more delicate. A Positive Energy Theorem in this setting has been established by Borghini and Mazzieri \cite{BorghiniMazzieri2018, BorghiniMazzieri2020} for compact static metrics. See also \cite{AgostinianiBorghiniMazzieri2026} for recent developments on mass-type invariants in the presence of a positive cosmological constant. We also note the connection with rigidity and the Min--Oo conjecture (see \cite{BrendleMarquesNeves}, where it was shown to be false). In contrast with \cite{BorghiniMazzieri2018, BorghiniMazzieri2020, AgostinianiBorghiniMazzieri2026}, our focus is on initial data sets which expand, and hence are not static.

To study isolated systems in this framework, we adopt a background spacetime model that reflects the observed expansion of the universe. The \textit{flat expanding de Sitter model} provides a natural choice, as it captures the large-scale behavior of a universe dominated by $\Lambda$-vacuum energy while retaining the mathematical simplicity of spatial flatness. Our goal is to find a definition of energy in this setting and establish a corresponding positivity result.

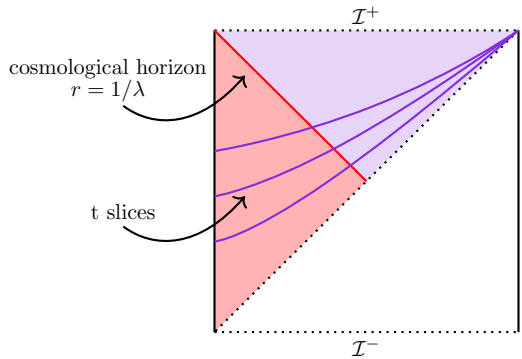
\begin{figure}
    \centering

    \begin{tikzpicture}[scale=1]
\colorlet{inner}{bello!20!white}
\def\a{2}

\coordinate (tl) at (-\a,\a); 
\coordinate (tr) at (\a,\a); 
\coordinate (br) at (\a,-\a); 
\coordinate (bl) at (-\a,-\a);


\fill[color = inner] (tl) -- (tr) -- (bl) --  cycle;
\fill[color = red!30!white] (tl) -- (0,0) -- (bl) --  cycle;
\draw[-,thick]  (tr) -- (br);
\draw[-,thick]  (tl) -- (bl);
\draw[dotted, thick]  (tl) -- (tr);
\draw[dotted,thick]  (bl) -- (br);

\draw[-,thick, red] (0,0) -- (tl) ;
\draw[-,thick, dotted] (tr) -- (bl) ;
\draw[->,thick] (-\a-1.2,1) .. controls (-\a-.8,.7) and (-\a-.2,.7) .. (-\a+.4,1.4);
\node [scale=.75] at (-\a-1.4,1.5)  {cosmological horizon}; 
\node [scale=.75] at (-\a-1.4,1.2)  {$r=1/\lambda$};

\draw [thick,bello] plot [smooth,tension=1] coordinates {(tr) (-.5,0) (-\a,-.8)};
\draw [thick,bello] plot [smooth,tension=1] coordinates {(tr) (-.3,.5) (-\a,-.2)};
\draw [thick,bello] plot [smooth,tension=1] coordinates {(tr) (.1,1) (-\a,0.4)};
\draw[->,thick] (-\a-1.2,-.6) .. controls (-\a-.8,-.9) and (-\a-.2,-.9) .. (-\a+.4,-.2);
\node [scale=.75] at (-\a-1.2,-.4)  {t slices};

\node [scale=.75] at (0,2.2)  {$\mathcal{I}^+$}; 
\node [scale=.75] at (0,-2.2)  {$\mathcal{I}^-$};

\end{tikzpicture}
    
    \caption{One-half of de Sitter space is covered by the spatially flat FLRW model with an exponentially growing scale factor. In flat-expanding coordinates, each $t$-slice is isometric to the Euclidean space $\R^3$ and $\partial_t$ is Killing but timelike only for values within the cosmological horizon, i.e. for $r < 1/\lambda$. Compare with \cite[page 127]{HE}. For an in-depth analysis of how the flat-expanding coordinates relate to the global coordinates of de Sitter space, see \cite{Geshnizjani2023OnTI}.}
    \label{fig:penrose-diagram-deSitter}
\end{figure}

Recall that de Sitter space is the maximally symmetric solution to Einstein's equations with a positive cosmological constant $ \Lambda > 0 $. It serves as a model for a universe undergoing accelerated expansion, which is supported by current observations. In \emph{comoving coordinates}, the metric for the de Sitter space takes the form 
\begin{equation}\label{eq: FLRW metric}
  \mathscr{M}_0 = \R \times \R^3 \quad \text{ and } \quad \mathscr{g}_0 = -d\tau^2 + e^{2\lambda \tau}(d\rho^2 + \rho^2 d\Omega^2),
\end{equation}
where\footnote{The relation $\Lambda = 3\lambda^2$ will hold throughout the rest of the paper.} 
\begin{equation}
\lambda := \sqrt{\frac{\Lambda}{3}}.
\end{equation}
Equation \eqref{eq: FLRW metric} is the standard spatially flat  FLRW model with scale factor $ a(\tau) = e^{\lambda \tau}$.

In \emph{flat-expanding coordinates}, $ t = \tau $, $ r = e^{\lambda \tau}\rho $, the metric takes the form:
\begin{equation}\label{eq: flat expanding de sitter}
    \mathscr{g}_0 = -\left(1 - r^2\lambda^2\right)dt^2 - 2 \lambda r \, drdt + dr^2 + r^2 d\Omega^2.    
\end{equation}
Constant $ t $ slices of the above metric are spatially flat and umbilic, with second fundamental form
\begin{equation}\label{eq: umbilic}
    k_{0} = \lambda \delta.    
\end{equation}
Therefore, this particular foliation of de Sitter space yields an ideal background to model spatially flat systems which undergo cosmological expansion.

In flat-expanding coordinates, the vector field $ \partial_t $ is a Killing vector field; however, it ceases to be timelike for $r$-values greater than the \emph{cosmological horizon}, $r = 1/\lambda$. This leads to complications with an asymptotic definition of energy. The presence of this cosmological horizon suggests the need for a shift from global to \emph{quasi-local} definitions of energy.

Quasi-local notions of energy play a central role in mathematical relativity, as they provide a framework for defining and analyzing the energy content of finite spacetime regions (see \cite{lee2019} for a survey of various notions of quasi-local energy and their properties).

In the quasi-local setting, one considers an initial data set $(\Omega, g, k)$, where $\Omega$ is compact with with boundary\footnote{In general, $\Omega$ can have multiple boundary components.} $\S := \partial \Omega$ of positive Gauss curvature. By the Weyl embedding theorem \cite{Nirenberg1953, Pogorelov1952}, there is an isometric embedding of $\Sigma$ within Euclidean space $\R^3$ which bounds a compact and convex subset $\Omega_0$. Let $H_0$ and $H$ denote the mean curvatures of $\Sigma$ within $\R^3$ and $\Omega$, respectively. (In our convention, the mean curvature is the trace of the second fundamental form with respect to the outward unit normal.) Then we define the energy $E_\lambda$ as a direct generalization of the Liu--Yau energy \cite{LiuYau2003, LiuYau2006}. It can be derived from the free-energy Hamiltonian in general relativity, see eq. (105) in \cite{Kijowski1997}. 
\
\begin{definition}\label{def: E_lambda}
Let $(\Omega, g, k)$ be a compact initial data set with nonempty boundary $\Sigma := \partial \Omega$ of positive Gauss curvature. Let $H
$ be the mean curvature of $\Sigma$ in $\Omega$ with respect to the outward normal, and let $H_0$ be the mean curvature of the isometric embedding of $\Sigma$ into Euclidean space $\R^3$. Then
\begin{equation}\label{eq: E_lambda def}
    E_\lambda := \frac{1}{8\pi} \int_\Sigma \left( \sqrt{H_0^2 - 4\lambda^2} \,-\, \sqrt{H^2 - (\tr_\Sigma k)^2} \right) d\mu.
\end{equation}
\end{definition}

As with the Liu--Yau energy, $E_\lambda$ can be expressed as the difference between the norms of the codimension-two mean curvature vectors in the relevant spacetimes. Indeed it's not hard to see that 
\begin{equation}\label{eq: defi mcv}
E_\lambda = \frac{1}{8\pi} \int_\Sigma \big(|\vec{H}_0| \,- \,|\vec{H}|\big)\, d\mu.
\end{equation}
Here $\vec{H}$ is the mean curvature vector of $\Sigma$ (assumed to be spacelike) within the original spacetime $(\mathscr{M},\mathscr{g})$ containing $(\Omega, g,k)$, while $\vec{H}_0$ is the mean curvature vector of $\Sigma$ within the de Sitter space $(\mathscr{M}_0,\mathscr{g}_0)$ with $\Lambda = 3\lambda^2$ containing the compact initial data set $(\Omega_0, \delta, k_0 = \lambda \delta)$ within a constant $t$-slice. See figure \ref{fig: penrose of LY} for a comparison between the Liu-Yau energy and $E_\lambda$.  Like the Liu--Yau energy, inherent in the definition of $E_\lambda$, is the specific choice of Weyl embedding; different embeddings into de Sitter space will yield different energies. See section \ref{sect: dep on embed}.

The goal of this paper is to establish positivity results for $E_\lambda$. Analogous to the classical Liu--Yau framework, this requires ensuring that the data under consideration can be embedded isometrically into $\mathbb{R}^3$. Moreover, for the definition to be physically meaningful in a de Sitter background, the embedding must lie within the cosmological horizon $r< 1/\lambda$ where the Killing field $\partial_t$ is timelike (see figure \ref{fig: penrose of LY}).

In Section \ref{sect: main result} we first state a general conjecture concerning the positivity of $E_\lambda$. The conjecture naturally leads to a two-fold problem: 
\begin{enumerate}
    \item a purely geometric question, concerning the conditions under which a given surface $\Sigma$ can be isometrically embedded inside the cosmological horizon;
    \item a question regarding the positivity of the energy formula given the geometric data.
\end{enumerate}
In this work, we address both issues and provide partial answers for small values of $\lambda.$ 

We remark that the observed value for the cosmological constant is, in fact, small. Using the observations made by \cite{Planck_2020}, we have $\Lambda_{\rm obs} = 1.10\times 10^{-52}$ ${\rm m}^{-2}$. For this value, we have $\lambda_{\rm obs} = 6.05 \times 10^{-27}$ ${\rm m}^{-1}$. The cosmological horizon at this observed value has a radius of $r = 1/\lambda_{\rm obs} = 1.43 \times 10^{26}$ ${\rm m}$, which is about 100,000 times greater than the diameter of the Milky Way.

\begin{figure}   
\begin{center}
\begin{tikzpicture}[scale=1]
\colorlet{inner}{bello!20!white}

\begin{scope}[shift={(-1.5,0)}]
\def\a{2}

\coordinate (tl) at (-\a,\a); 
\coordinate (tr) at (\a,\a); 
\coordinate (br) at (\a,-\a); 
\coordinate (bl) at (-\a,-\a);

\fill[color = inner] (tl) -- (0,0) -- (bl) --  cycle;
\draw[dotted, thick]  (tl) -- (0,0);
\draw[dotted,thick]  (bl) -- (0,0);
\draw[thick] (tl) -- (bl);

\draw [thick,bello] (-\a,0) -- (0,0);

\node [scale = .5] [circle, draw, fill = black] at (-1,0)  {};
\node [scale = .75] at (-1,-.25)  {$\S$};

\node [scale=.75] at (-.7,1.1)  {$\mathcal{I}^+$}; 
\node [scale=.75] at (-.7,-1.1)  {$\mathcal{I}^-$};

\end{scope}

\begin{scope}[shift={(3,0)}]
\def\a{2}

\coordinate (tl) at (-\a,\a); 
\coordinate (tr) at (\a,\a); 
\coordinate (br) at (\a,-\a); 
\coordinate (bl) at (-\a,-\a);
\fill[color = inner] (tl) -- (tr) -- (bl) --  cycle;
\draw[-,thick]  (tr) -- (br);
\draw[-,thick]  (tl) -- (bl);
\draw[dotted, thick]  (tl) -- (tr);
\draw[dotted,thick]  (bl) -- (br);

\draw[-,thick, red] (0,0) -- (tl) ;
\draw[-,thick, dotted] (tr) -- (bl) ;

\draw [thick,bello] plot [smooth,tension=1] coordinates {(tr) (-.3,.5) (-\a,-.2)};

\node [scale = .5] [circle, draw, fill = black] at (-1,.15)  {};
\node [scale = .75] at (-1,-.1)  {$\S$};

\node [scale=.75] at (0,2.2)  {$\mathcal{I}^+$}; 
\node [scale=.75] at (0,-2.2)  {$\mathcal{I}^-$};

\end{scope}

\end{tikzpicture}
\end{center}
\caption{Left: the Penrose diagram of Minkowski spacetime. To compute the Liu--Yau energy, the surface $\Sigma$ is isometrically embedded in a constant $t$-slice of Minkowski spacetime. Right: the Penrose diagram of de Sitter space. The surface is also isometrically embedded in a constant $t$-slice of the flat de Sitter space and lies within the cosmological horizon (shown in red).}
 \label{fig: penrose of LY}   
\end{figure}
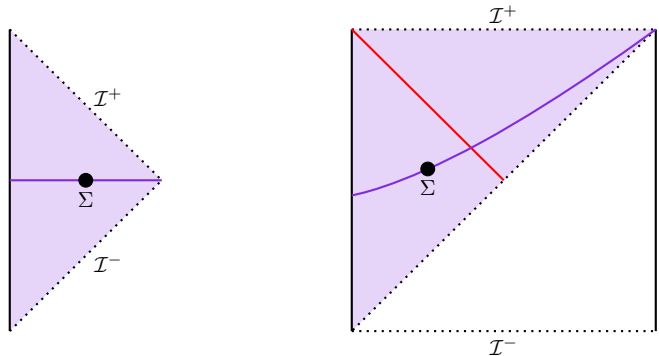

\section{Main conjecture and theorem}\label{sect: main result}

As explained above, the presence of a cosmological horizon in the flat-expanding coordinates of de Sitter space suggests the need to replace the usual notion of total energy at infinity with a quasi-local one. Motivated by this, we introduce the energy $E_\lambda$ defined in \eqref{eq: E_lambda def}, which is a direct generalization of the Liu--Yau quasi-local energy \cite{LiuYau2003, LiuYau2006} to the flat-expanding patch of de Sitter space. The relevant definitions and properties of the classical Liu--Yau construction will be recalled in Section \ref{sect:quasi-local-results}.

Our goal is to establish positivity results for $E_\lambda$. In analogy with the classical Liu--Yau framework, this requires verifying that the data under consideration can be embedded isometrically into $\mathbb{R}^3$. Furthermore, for the definition to be physically meaningful in the de Sitter case, the embedding must lie inside the cosmological horizon of the background spacetime (see Figure \ref{fig: penrose of LY}).

Consider an initial data set $(\Omega,g,k)$ where $\Omega$ is compact with non-empty boundary $\S := \partial \Omega$ of positive Gauss curvature.  By the Weyl embedding theorem, which was solved independently by Nirenberg and Pogorelov \cite{Nirenberg1953,Pogorelov1952}, $\Sigma$ can be isometrically embedded into the Euclidean space $\mathbb{R}^3$ with the image of $\Sigma$ as the boundary of a compact and convex domain $\Omega_0 \subset \R^3$; moreover, this embedding is unique up to Euclidean rigid motions. Let $H_0$ be the mean curvature of the embedding in $\R^3$ with respect to the outward normal. In the initial data $\Omega$, let $H$ be the mean curvature of $\Sigma$ with respect to outward normal. 


Our first conjecture states that the energy $E_\lambda$ should be nonnegative whenever the initial data set satisfies the dominant energy condition with respect to $\Lambda = 3\lambda^2$ and the Weyl embedding of $\Sigma$ in $\R^3$ lies within the cosmological horizon; as in Liu-Yau \cite{LiuYau2003, LiuYau2006}, we also require that the mean curvature vectors of $\Sigma$ are spacelike within $\Omega$ and the Weyl embedding in $\R^3$ so that $E_\lambda$ is well defined.

\begin{conj}\label{conj: main}
Fix $\l \geq 0$. Let $(\Omega, g, k)$ be a compact initial data set with nonempty boundary $\Sigma := \partial\Omega$ of positive Gauss curvature $K > 0$. Assume that:
\begin{enumerate}
  \item the Weyl embedding of $\Sigma$ in $\mathbb{R}^3$ lies entirely inside the cosmological horizon, $r < 1/\lambda$;
  \item $H_0 > 2\lambda$;
  \item $H > |\tr_\Sigma k|$;
  \item the initial data set $(\Omega,g,k,\lambda)$ satisfies the dominant energy condition with $\Lambda = 3\lambda^2$.
\end{enumerate}
Then the quasi-local energy
\begin{equation}\label{eq: energy_in_conj1}
    E_\lambda := \frac{1}{8\pi} \int_\Sigma \left( \sqrt{H_0^2 - 4\lambda^2} \,-\, \sqrt{H^2 - (\tr_\Sigma k)^2} \right) d\mu
\end{equation}
is non-negative:
\begin{equation}
    E_\lambda \ge 0.
\end{equation}
\end{conj}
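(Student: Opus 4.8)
The plan is to reduce the statement to a de Sitter analogue of the Shi--Tam monotonicity together with a positive mass argument, mirroring the classical route to Liu--Yau positivity. The first point is that hypothesis (4) feeds directly into \eqref{eq: mu def}: the dominant energy condition with respect to $\Lambda$ implies the ordinary dominant energy condition $\mu \geq |J|_g$ for the geometric energy density $\mu$ of \eqref{eq: mu def 2}. Thus on the physical side nothing new is needed beyond the Liu--Yau setup, and I would solve Jang's equation over $\Omega$ with the Liu--Yau boundary conditions to produce a Riemannian manifold $(\hat\Omega,\hat g)$ whose scalar curvature is nonnegative (by $\mu \geq |J|_g$) and whose boundary carries the same induced metric as $\Sigma$ with mean curvature $\hat H \geq \sqrt{H^2-(\tr_\Sigma k)^2}$. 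This converts the physical term $|\vec{H}| = \sqrt{H^2-(\tr_\Sigma k)^2}$ into a purely Riemannian boundary mean curvature and isolates what is genuinely new: the reference term $\sqrt{H_0^2-4\lambda^2}$ in place of the Euclidean $H_0$ of classical Liu--Yau.

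The heart of the matter is therefore a \emph{cosmological} Brown--York inequality. To prove it I would imitate Shi--Tam: take the Euclidean exterior $\R^3\setminus\Omega_0$ of the Weyl image, foliate it by the outward quasi-spherical flow, and build an exterior metric of the form $u^2\,dt^2 + g_t$ by solving the Bartnik quasi-spherical equation so that its inner mean curvature matches $\hat H$ while the scalar curvature stays controlled. Along this foliation I would track the cosmological quasi-local mass $m_\lambda(\Sigma_t) = \tfrac{1}{8\pi}\int_{\Sigma_t}\bigl(\sqrt{H_0(t)^2-4\lambda^2}-H_{\hat g}(t)\bigr)\,d\mu_t$. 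Because the Jang reduction gives $\hat H \geq \sqrt{H^2-(\tr_\Sigma k)^2}$ on $\Sigma$, its value at the inner boundary satisfies $8\pi\,m_\lambda(\Sigma_0) \leq 8\pi E_\lambda$. The goal is then to establish that $m_\lambda(\Sigma_t)$ is monotone non-increasing in $t$ with nonnegative limit, so that $E_\lambda \geq m_\lambda(\Sigma_0) \geq \lim_t m_\lambda(\Sigma_t) \geq 0$, exactly as Liu--Yau positivity follows from Shi--Tam monotonicity and the positive mass theorem.

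The main obstacle is the behaviour of this scheme at infinity, which is precisely where $\Lambda>0$ makes the de Sitter case strictly harder than the Minkowski one. As a coordinate sphere approaches the cosmological horizon $r=1/\lambda$, its Euclidean mean curvature behaves like $H_0 \sim 2/r$ and decreases to $2\lambda$, so the reference integrand $\sqrt{H_0^2-4\lambda^2}$ degenerates to zero exactly at the horizon; the quasi-spherical foliation cannot be continued to a clean spatial infinity, and there is no unconditional positive mass theorem to close the argument (consistent with the failure of the Min--Oo conjecture recorded in \cite{BrendleMarquesNeves}). This is exactly why hypotheses (1) and (2) are imposed: they force $H_0>2\lambda$ on $\Sigma$ and keep the embedding strictly inside the horizon, so that the degeneration is avoided on the relevant region. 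For the partial result I expect to exploit the smallness of $\lambda$: writing $H_0-\sqrt{H_0^2-4\lambda^2}=O(\lambda^2/H_0)$, the de Sitter energy differs from the classical Liu--Yau energy by an explicit $O(\lambda^2)$ term, and I would combine classical Liu--Yau positivity (valid because $\mu \geq |J|_g$) with the geometric lower bound on $H_0$ coming from confinement in a ball of radius $1/\lambda$ to absorb this term below an explicit $\lambda$-threshold depending on $\Sigma$. I emphasize that this perturbative absorption is delicate, since the classical Liu--Yau energy may itself vanish, so the confinement estimate rather than a naive expansion must supply the needed margin.

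Finally, I would record the expected rigidity as a consistency check: the inequality should saturate precisely on the flat de Sitter slice $(\Omega_0,\delta,\lambda\delta)$, for which $\tr_\Sigma k = 2\lambda$ and $\sqrt{H_0^2-4\lambda^2}=\sqrt{H^2-(\tr_\Sigma k)^2}$ pointwise, so that $E_\lambda=0$. This confirms that $\sqrt{H_0^2-4\lambda^2}$, and not $H_0$, is the correct de Sitter ground-state reference, and it indicates that any monotonicity formula used in the second step must degenerate exactly on umbilic data with $\tr_\Sigma k=2\lambda$ and Euclidean induced geometry.
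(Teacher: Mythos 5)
First, note that the statement you are addressing is stated in the paper as a \emph{conjecture}: the paper itself does not prove it, but only the partial result of Theorem \ref{thm: main}, valid under the extra hypothesis $\lambda \leq \alpha_\lambda$. Your first route (Jang reduction plus a cosmological quasi-spherical monotonicity) is not what the paper does, and you yourself correctly identify why it cannot be closed: the reference integrand degenerates at the cosmological horizon and there is no unconditional positive mass theorem with $\Lambda>0$ to anchor the limit. That part is an honest obstruction analysis, not a proof. Your fallback for small $\lambda$ is, in outline, the paper's actual argument (Lemma \ref{lem: positivity}): write
\begin{equation*}
E_\lambda \;=\; E_{\LY} \;-\; \frac{1}{8\pi}\int_\Sigma \frac{4\lambda^2}{H_0 + \sqrt{H_0^2 - 4\lambda^2}}\, d\mu,
\end{equation*}
bound the correction term by a pointwise lower bound on $H_0$, and invoke classical Liu--Yau positivity as a black box.

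However, there is a genuine gap in the mechanism you propose for the key estimate. You claim the needed lower bound on $H_0$ comes ``from confinement in a ball of radius $1/\lambda$.'' Containment of a convex surface in a ball of radius $R$ gives no pointwise lower bound on its mean curvature: a thin convex ``pancake'' inside a small ball has arbitrarily small $H_0$ on its flat faces, so the correction integral cannot be absorbed this way. The paper obtains the bound from the Gauss equation instead, $H_0^2 = 2|\mathring{A}_0|^2 + 4K \geq 4K_{\min}$, so the admissible threshold is governed by $K_{\min}$ and by the size of $E_{\LY}$ relative to $r_\Sigma^2$ (this is the constant $\alpha_\lambda$ of \eqref{eq:def-lambdamax}). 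The logical direction between curvature and confinement is also reversed relative to the paper: there, a lower bound on $K$ is what \emph{produces} the embedding inside the horizon, via Bonnet--Myers and Jung's theorem (Lemma \ref{lem:embedding-in-cosmo}); the confinement is a conclusion, not a source of estimates. Finally, when $E_{\LY}=0$ no margin is available from any source --- the paper's threshold collapses to $\lambda=0$ in that case --- so your hope that ``the confinement estimate rather than a naive expansion must supply the needed margin'' cannot be realized. Your closing rigidity check (saturation on umbilic data with $\tr_\Sigma k = 2\lambda$ and flat induced geometry) is correct and consistent with the discussion in Section \ref{sect: conclusions}.
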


In the umbilic setting, $k = \lambda g$, the dominant energy condition in question is a purely Riemannian condition. Indeed, from the first constraint equation in \eqref{eq: constraint}, the dominant energy condition with $\Lambda = 3\lambda^2$ is equivalent to requiring nonnegativity of scalar curvature, i.e. $R_g \geq 0$. This leads to the following purely Riemannian version of Conjecture \ref{conj: main}.

\begin{conj}\label{conj: riem}
Fix $\lambda \geq 0$. Let $(\Omega, g, k = \lambda g)$ be a compact initial data set with nonempty boundary $\Sigma := \partial\Omega$ of positive Gauss curvature $K > 0$. Assume that:
\begin{enumerate}
  \item the Weyl embedding of $\Sigma$ in $\mathbb{R}^3$ lies entirely inside the cosmological horizon, $r < 1/\lambda$;
  \item $H_0 > 2\lambda$;
  \item $H > 2\lambda$;
  \item $R_g \geq 0$.
\end{enumerate}
Then the quasi-local energy
\begin{equation}\label{eq: energy_in_conj2}
    E_\lambda = \frac{1}{8\pi} \int_\Sigma \left( \sqrt{H_0^2 - 4\lambda^2} \,-\, \sqrt{H^2 - 4\lambda^2} \right) d\mu
\end{equation}
is non-negative:
\begin{equation}
    E_\lambda \ge 0.
\end{equation}
\end{conj}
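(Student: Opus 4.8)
The plan is to reduce the statement to the positivity of the Brown--York mass proved by Shi and Tam and then to control the nonlinearity introduced by the square roots. Since $k = \lambda g$ we have $\tr_\Sigma k = 2\lambda$, so the physical integrand is $\sqrt{H^2 - 4\lambda^2}$; writing $f(x) := \sqrt{x^2 - 4\lambda^2}$, the inequality $E_\lambda \ge 0$ reads
\begin{equation*}
\int_\Sigma f(H_0)\, d\mu \;\ge\; \int_\Sigma f(H)\, d\mu ,
\end{equation*}
where both surfaces carry the same induced metric, the same Gauss curvature $K$, and the same area element $d\mu$, and hypotheses (2)--(3) make the integrands real. I would first record that the Gauss equation in $\R^3$ gives $H_0 \ge 2\sqrt K$, so (2) is in fact automatic once $K > \lambda^2$.

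The linear backbone of the argument is the Shi--Tam theorem. Since $\Sigma$ has positive Gauss curvature (hence $H_0 > 0$), $(\Omega,g)$ has $R_g \ge 0$, and $H > 0$, the Brown--York mass is nonnegative, i.e.\ $\int_\Sigma H_0\, d\mu \ge \int_\Sigma H\, d\mu$ — precisely the case $f = \mathrm{id}$, that is $\lambda = 0$. To upgrade this to the concave function $f$, which is increasing and strictly concave on $\{x > 2\lambda\}$, I would use the tangent-line bound at $H_0$,
\begin{equation*}
f(H_0) - f(H) \;\ge\; f'(H_0)(H_0 - H) \;=\; w\,(H_0 - H), \qquad w := \frac{H_0}{\sqrt{H_0^2 - 4\lambda^2}} > 1,
\end{equation*}
which gives $8\pi E_\lambda \ge \int_\Sigma w\,(H_0 - H)\, d\mu$. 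The weight $w$ depends only on the known Weyl-embedding data; were it constant, Shi--Tam would finish the proof immediately.

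The non-constant weight is the step I expect to be the main obstacle. Concavity of $f$ works against positivity precisely on the region where $H > H_0$, whereas Shi--Tam controls only the signed total $\int (H_0 - H)$ and not its distribution against $w$. Moreover $w$ is uniformly close to $1$ only where $H_0$ stays away from the saturation value $2\lambda$: as $H_0 \downarrow 2\lambda$ one has $w \to \infty$, so uniform control requires a strict form of (2), e.g.\ $H_0^2 - 4\lambda^2 \ge c > 0$ (equivalently $K \ge \lambda^2 + \delta$). Under such non-degeneracy (which bounds $1/H_0$ independently of $\lambda$) and for small $\lambda$, one may expand $f(x) = x - 2\lambda^2/x + O(\lambda^4)$ and write $\int w\,(H_0 - H) = \int (H_0 - H) + O(\lambda^2)$, the error being controlled by this bound together with the diameter and area bounds on $\Omega_0 \subset B_{1/\lambda}$ coming from the horizon hypothesis (1); it can then be absorbed provided the Brown--York slack is quantitatively positive. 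Since that slack can itself degenerate to zero, a purely perturbative scheme cannot reach the full range of $\lambda$, and I expect the complete conjecture to require a genuinely de Sitter replacement for Shi--Tam — running the quasi-spherical parabolic extension against the reference $k_0 = \lambda\delta$ so as to produce a monotone quasi-local mass whose limit is an asymptotically flat-de Sitter energy, together with a positive energy theorem valid across the cosmological horizon. The small-$\lambda$ theorem of the following section should be read as the rigorous form of this perturbative route.
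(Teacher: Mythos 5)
This statement is labelled a \emph{conjecture} in the paper, and the paper itself does not prove it in full generality: it only establishes the small-$\lambda$ version (Corollary \ref{cor: riem}, via Lemma \ref{lem: positivity} and Lemma \ref{lem:embedding-in-cosmo}), with an explicit threshold $\lambda_{\max}$ depending on $K_{\min}$, $r_\Sigma$ and $E_{\BY}$. Your overall assessment --- that the reduction to Shi--Tam plus absorption of an $O(\lambda^2)$ deficit can only reach small $\lambda$, and that the full conjecture likely needs a genuinely de Sitter analogue of the quasi-spherical extension --- is therefore consistent with what the paper actually achieves and with its stated open problems. So there is no ``gap'' relative to the paper; but your concrete mechanism differs from the paper's and creates an avoidable obstacle.

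Specifically, the tangent-line step $f(H_0)-f(H)\ge f'(H_0)(H_0-H)$ is a detour. It couples the reference and physical terms through the non-constant weight $w=f'(H_0)$, and you then have to worry about the distribution of the sign of $H_0-H$ against $w$ --- a problem Shi--Tam does not control. The paper's decomposition sidesteps this entirely: since $\sqrt{H^2-4\lambda^2}\le H$, one has
\begin{equation*}
8\pi E_\lambda \;\ge\; \int_\Sigma\Bigl(\sqrt{H_0^2-4\lambda^2}-H\Bigr)\,d\mu
\;=\; 8\pi E_{\BY} \;-\; \int_\Sigma\Bigl(H_0-\sqrt{H_0^2-4\lambda^2}\Bigr)\,d\mu ,
\end{equation*}
so only the reference term carries a deficit, and that deficit is pointwise bounded by $4\lambda^2/H_0$. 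Moreover your concern that $w\to\infty$ as $H_0\downarrow 2\lambda$ is moot: you yourself note $H_0\ge 2\sqrt{K}\ge 2\sqrt{K_{\min}}$ from the Gauss equation, so in the regime $\lambda<\sqrt{K_{\min}}$ the deficit integrand is at most $2\lambda^2/\sqrt{K_{\min}}$ with no extra non-degeneracy hypothesis. Integrating and comparing with $E_{\BY}$ gives exactly the quantitative condition $\lambda\le\lambda_{\max}$ of the paper (its Lemma \ref{lem: positivity} sharpens the constant by keeping the denominator $H_0+\sqrt{H_0^2-4\lambda^2}$ and optimizing over a parameter $C$). In short: replace the concavity/weight argument by the exact identity $H_0-\sqrt{H_0^2-4\lambda^2}=4\lambda^2\bigl/\bigl(H_0+\sqrt{H_0^2-4\lambda^2}\bigr)$ applied only to the reference term, and your perturbative route becomes the paper's rigorous proof of the small-$\lambda$ case; the full conjecture remains open in the paper as well.
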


Addressing these conjectures involves two challenges. The first is geometric in nature: one would like to determine under which conditions the surface $\S$ can be isometrically embedded inside the cosmological horizon. The second concerns the positivity of the energy $E_\l$ itself. We will show that both of these concerns can be resolved for some values of $\lambda$.
\\

We start by introducing a constant $\alpha$ which depends on the initial data $(\Omega, g, k)$ via
 \begin{equation}\label{eq:def-lambdamax}
    \a = 
        \begin{cases}
            \frac{2\sqrt6}{3\pi} \sqrt{K_{\min}} & \text{if }  \sqrt{K_{\min}} < \frac{E_{\LY}}{r_\S^2}\\
            \min \left\lbrace \sqrt{\frac{E_{\LY}}{r_\S^2} \left( 2\sqrt{K_{\min}}-\frac{E_{\LY}}{r_\S^2} \right)} , \frac{2\sqrt6}{3\pi} \sqrt{K_{\min}}  \right\rbrace & \text{if }  \sqrt{K_{\min}} \geq \frac{E_{\LY}}{r_\S^2}
        \end{cases}.
    \end{equation}
In this expression, $K_{\min}$ is the minimum of the Gauss curvature $K$ on $\Sigma$, $r_\Sigma$ is the area radius of $\Sigma$, and $E_{\LY}$ is the Liu-Yau energy of the initial data (see equation \eqref{eq: LY energy 1} in the next section). We note that the definition of $\alpha$ does not involve $\lambda$ explicitly; however, an implicit dependence may occur if either the metric $g$ or the second fundamental form $k$ (and hence the Liu--Yau energy $E_{\LY}$) depends on $\lambda$.

Our main result, Theorem \ref{thm: main}, addresses Conjecture \ref{conj: main} for some values of $\lambda$.  
Then we will find that $E_\l \geq 0$ so long as $\lambda\leq \alpha_\lambda $. Recognize that $\alpha_\lambda$ is nonnegative whenever the initial data set satisfies the dominant energy condition with respect to $\Lambda \geq 0$. Moreover $\alpha_\lambda$ is strictly positive if the geometrical energy density $\mu$ is strictly positive, which follows from the rigidity statement of Theorem \ref{thm: LY}.

\begin{theorem}\label{thm: main} 
Fix $\lambda \geq 0$. Let $  (\Omega, g, k) $ be a compact initial data set with nonempty boundary $\Sigma := \partial \Omega$ of positive Gauss curvature $K > 0$ and $H > |\tr_\Sigma k|$. Assume that the dominant energy condition with respect to $\Lambda = 3\lambda^2$ holds. If $\lambda \leq \alpha$, where $\alpha$ is given by \eqref{eq:def-lambdamax}, then the surface $\Sigma$ isometrically embeds inside the cosmological horizon $r < 1/\lambda$ in $\R^3$, and the energy \eqref{eq: energy_in_conj1} is nonnegative:
\begin{equation}
     E_\lambda \geq 0. 
\end{equation} 
Moreover, if $0 < \l < \a$, then $E_\l >0$; if $\l = \a >0$ and $E_{\lambda} = 0$, then $\Sigma$ is isometric to a disjoint union of round spheres.
\end{theorem}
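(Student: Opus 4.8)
The plan is to reduce the statement to a comparison between the two curvature quantities appearing in $E_\lambda$, namely $\sqrt{H_0^2 - 4\lambda^2}$ and $\sqrt{H^2 - (\tr_\Sigma k)^2}$, using the classical Liu--Yau positivity (Theorem \ref{thm: LY}) as a black box. First I would establish the \emph{geometric step}: showing that under the hypothesis $\lambda \le \alpha_\lambda$ the Weyl embedding of $\Sigma$ lands inside the cosmological horizon $r < 1/\lambda$. The natural tool here is an isoperimetric-type or curvature bound for convex bodies in $\R^3$: the embedded $\Sigma$ bounds a convex domain $\Omega_0$, and I expect one can bound the circumradius of $\Omega_0$ (equivalently the maximal $r$-value attained by the embedding) in terms of $K_{\min}$ and the area. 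The appearance of the constant $\tfrac{2\sqrt6}{3\pi}\sqrt{K_{\min}}$ in $\alpha_\lambda$ strongly suggests a bound of the form $r_{\max} \le C/\sqrt{K_{\min}}$ for a convex surface, so that $\lambda < \tfrac{2\sqrt6}{3\pi}\sqrt{K_{\min}}$ forces $r_{\max} < 1/\lambda$. I would isolate this as a lemma about convex hypersurfaces and verify the numerical constant carefully.

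Next comes the \emph{energy-positivity step}. The cleanest route is to write
\begin{equation}
8\pi E_\lambda = \int_\Sigma \Big(\sqrt{H_0^2 - 4\lambda^2} - \sqrt{H^2 - (\tr_\Sigma k)^2}\Big)\,d\mu
= 8\pi E_{\LY} + \int_\Sigma \Big(\sqrt{H_0^2 - 4\lambda^2} - H_0\Big)\,d\mu + \int_\Sigma \Big(H - \sqrt{H^2 - (\tr_\Sigma k)^2}\Big)\,d\mu,
\end{equation}
where I have inserted and subtracted the Liu--Yau integrand $H_0 - H$. By Theorem \ref{thm: LY} the first term $8\pi E_{\LY} \ge 0$. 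The two remaining integrals are both \emph{nonpositive} corrections, since $\sqrt{H_0^2 - 4\lambda^2} \le H_0$ and $\sqrt{H^2 - (\tr_\Sigma k)^2} \le H$. Hence positivity of $E_\lambda$ is \emph{not} automatic, and the content of the theorem is that the Liu--Yau surplus dominates these corrections precisely when $\lambda \le \alpha_\lambda$. I would make this quantitative by bounding $H_0 - \sqrt{H_0^2 - 4\lambda^2} \le 2\lambda^2/\sqrt{H_0^2 - 4\lambda^2}$ (or a similar elementary estimate) and controlling $H_0$ from below via $K_{\min}$ and $r_\Sigma$; the specific threshold in \eqref{eq:def-lambdamax}, with its two cases distinguished by whether $\sqrt{K_{\min}}$ exceeds $E_{\LY}/r_\Sigma^2$, should emerge from solving the resulting quadratic-in-$\lambda$ inequality $E_{\LY} - (\text{correction})(\lambda) \ge 0$. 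The expression $\sqrt{(E_{\LY}/r_\Sigma^2)(2\sqrt{K_{\min}} - E_{\LY}/r_\Sigma^2)}$ is exactly the positive root of such a quadratic, which confirms this is the right mechanism.

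For the \emph{rigidity / strict positivity} claims, I would track equality throughout. When $0 < \lambda < \alpha_\lambda$ the quadratic inequality is strict, so $E_\lambda > 0$ follows immediately provided the correction estimates are not simultaneously sharp; the strict inequality in $\lambda$ gives the needed slack. For the boundary case $\lambda = \alpha_\lambda > 0$ with $E_\lambda = 0$, every inequality used must be an equality: in particular the estimates relating $H_0$ to $K_{\min}$ and $r_\Sigma$ must saturate, which for a convex surface should force $\Sigma$ to be a round sphere (the equality case of the underlying isoperimetric or curvature bound). I would invoke the rigidity of Theorem \ref{thm: LY} together with the equality analysis of the convexity lemma from the first step.

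The hard part will be the geometric step and, relatedly, pinning down the exact constant $\tfrac{2\sqrt6}{3\pi}\sqrt{K_{\min}}$: establishing a sharp circumradius bound for a convex surface purely in terms of $K_{\min}$ and the area radius, with a clean equality case, is more delicate than the algebraic energy comparison. A secondary subtlety is ensuring that the hypothesis $H > |\tr_\Sigma k|$ guarantees $H^2 - (\tr_\Sigma k)^2 > 0$ globally (so the integrand is real) and that, once $\Sigma$ sits inside the horizon, $H_0^2 - 4\lambda^2 > 0$ holds as well — i.e. that conditions (2) and (3) of Conjecture \ref{conj: main} are \emph{consequences} of $\lambda \le \alpha_\lambda$ rather than extra assumptions. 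I would verify this compatibility before assembling the final inequality.
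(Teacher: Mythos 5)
Your proposal follows essentially the same architecture as the paper's proof: a circumradius bound for the convex Weyl image in terms of $K_{\min}$ for the horizon condition (the paper gets the constant $\tfrac{2\sqrt6}{3\pi}$ exactly as you guessed, via Bonnet--Myers for $\mathrm{diam}(\Sigma)\le \pi/\sqrt{K_{\min}}$, convexity, and Jung's theorem), followed by the decomposition $E_\lambda = E_{\LY} + \tfrac{1}{8\pi}\int_\Sigma(\sqrt{H_0^2-4\lambda^2}-H_0)\,d\mu$ with the correction controlled through $H_0=\sqrt{2|\tf A_0|^2+4K}\ge 2\sqrt{K_{\min}}$, and the threshold \eqref{eq:def-lambdamax} arising precisely from the quadratic-in-$\lambda$ inequality you describe. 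The rigidity also comes from where you point: saturation of $H_0\ge 2\sqrt{K}$ forces $\tf A_0=0$, hence roundness (not from the equality case of the circumradius bound, which is attained by a tetrahedron, as the paper remarks).

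One local error needs fixing. You call $\int_\Sigma(H_0-H)\,d\mu$ the ``Liu--Yau'' term and invoke Theorem \ref{thm: LY} for its nonnegativity; but the Liu--Yau energy is $\tfrac{1}{8\pi}\int_\Sigma\bigl(H_0-\sqrt{H^2-(\tr_\Sigma k)^2}\bigr)d\mu$, and $\int_\Sigma(H_0-H)\,d\mu$ (a Brown--York-type quantity) is \emph{not} controlled by the dominant energy condition for general $k$, since $R_g\ge 0$ need not hold. Relatedly, your third integral $\int_\Sigma\bigl(H-\sqrt{H^2-(\tr_\Sigma k)^2}\bigr)d\mu$ is nonnegative, not nonpositive as you claim. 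The two slips cancel: grouping your first and third integrals yields exactly $8\pi E_{\LY}$, which Theorem \ref{thm: LY} does control, leaving only the single correction $\int_\Sigma(\sqrt{H_0^2-4\lambda^2}-H_0)\,d\mu\le 0$ to dominate. With that regrouping your argument is the paper's.
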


%


\begin{remark}
    We note that Theorem \ref{thm: main} holds for any choice of the second fundamental form $k$. In particular, we can adopt two different points of view:
    \begin{enumerate}
        \item \textbf{Fixed geometric data.} One may fix an initial data set $(\Omega, g, k)$ that is completely independent of $\lambda$. 
        In this case, $\lambda$ acts as an external parameter, and $\alpha$ is independent of $\lambda$. Therefore, in this case, the theorem provides an interval $[0, \alpha]$  of admissible values of $\lambda$ for which the data embeds within the cosmological horizon and the quasi-local energy $E_\lambda$ is nonnegative.  
        \item \textbf{$\lambda$-dependent data.} Alternatively, one may consider a $\lambda$-dependent family of data $(\Omega, g(\lambda), k(\lambda))$, for instance determined by the constraint equations or by prescribing an umbilic second fundamental form. 
        In this case, $\alpha$ depends on $\lambda$ since $E_{\LY}$ does. The theorem can be applied pointwise by verifying the condition $\lambda \leq \a$ for each value of $\lambda$.
    \end{enumerate}
\end{remark}

Concerning the upper bound for $\lambda$ in Theorem \ref{thm: main} above, in our proof this appears as a technical restriction, which we believe to be non-optimal. This arises due to our appeals to Lemmas \ref{lem:embedding-in-cosmo} and \ref{lem: positivity}. Specifically, we appeal to Lemma \ref{lem:embedding-in-cosmo} in order to grant the the Weyl embedding $\mathcal{W}:\Sigma\to \mathbb{R}^3$ isometrically embeds inside the cosmological horizon and for this we use a non-optimal estimate appealing to \cite{Jung1901} in order to produce a bound on the diameter of the enclosed region by $\mathcal{W}(\Sigma)$, which is suboptimal. This bound is what in the end sets an upper bound for $\lambda$ in the proof of Lemma \ref{lem:embedding-in-cosmo}. Similarly, we appeal to Lemma \ref{lem: positivity} to establish positive, where this lemma appeal to the stated upper bound for $\lambda$ to be able to reduce the positivity statement to that of the Liu--Yau energy $E_{\rm LY}$ (see Definition \ref{DefnLYEnergy}). Let us nevertheless highlight, that despite the introduction of such upper bounds for $\lambda$, Theorem \ref{thm: main} provides a geometrically determined such bound, establishing that for values of the cosmological constant which are not too large in comparison with clear-cut geometric quantities, the statement holds. Moreover, we emphasize that the actually observed value for the cosmological constant in our universe is remarkably small, and thus that, although maybe mathematically suboptimal, positive energy theorems for small $\Lambda$ are physically meaningful.

We now address Conjecture \ref{conj: riem}, the Riemannian version of Conjecture \ref{conj: main}. In particular, we show that there is a given interval of values of $\lambda>0$ for which the data correctly embeds within the cosmological horizon and has nonnegative energy $E_\lambda$. This interval will be only dependent on the geometric data $(\Omega,g)$.

\begin{corollary}\label{cor: riem} 
Fix $\lambda \geq 0$. Let $(\Omega, g)$ be a compact Riemannian manifold  of nonnegative scalar curvature with nonempty boundary $\Sigma := \partial \Omega$ of positive Gauss curvature $K > 0$. Then there exists a constant $\lambda_{\max} \geq 0$ only dependent on $(\Omega,g)$ such that, for any $\lambda \in [0, \lambda_{\rm max}]$, the surface $\Sigma$ isometrically embeds inside the cosmological horizon $r < 1/\lambda$ in $\mathbb{R}^3$, and if the mean curvature of $\Sigma$ in $(\Omega, g)$ satisfies $H > 2\lambda$, then the energy \eqref{eq: energy_in_conj2} of the initial data set $(\Omega, g, k = \lambda g)$ is well defined and nonnegative:
\begin{equation}
    E_\lambda \geq 0.
\end{equation}
Moreover, it's strictly positive if $0 < \lambda \leq \lambda_{\rm max}$.
\end{corollary}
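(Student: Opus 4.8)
The plan is to deduce the corollary from Theorem \ref{thm: main} by specializing to the umbilic choice $k = \lambda g$ and then resolving the implicit dependence of the threshold $\alpha_\lambda$ on $\lambda$. First I would verify the hypotheses of Theorem \ref{thm: main} for the initial data set $(\Omega, g, k = \lambda g, \lambda)$. Since $\Sigma$ is two-dimensional, $\tr_\Sigma k = 2\lambda$, so the stated assumption $H > 2\lambda$ is exactly $H > |\tr_\Sigma k|$. As recorded in the discussion preceding Conjecture \ref{conj: riem}, the umbilic constraint gives $J = \dv_g[\lambda g - 3\lambda g] = 0$ and, from the first constraint equation with $\Lambda = 3\lambda^2$, $\rho = \tfrac12 R_g$; hence the dominant energy condition with $\Lambda$ (namely $\rho \ge |J|_g$) is equivalent to $R_g \ge 0$, which is assumed. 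Thus the only condition left to secure is $\lambda \le \alpha_\lambda$.

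The conceptual heart of the proof is that $\alpha_\lambda$ in \eqref{eq:def-lambdamax} itself depends on $\lambda$ through the Liu--Yau energy $E_{\LY}$ of $(\Omega, g, \lambda g)$, so ``$\lambda \le \alpha_\lambda$'' is a self-referential condition rather than a fixed threshold. I would isolate this dependence: the quantities $K_{\min}$, $r_\Sigma$, and the Weyl-embedding mean curvature $H_0$ depend only on the intrinsic metric of $\Sigma$, and $H$ depends only on $(\Omega, g)$, so the sole $\lambda$-dependence enters through
\[
E_{\LY}(\lambda) = \frac{1}{8\pi}\int_\Sigma \left(H_0 - \sqrt{H^2 - 4\lambda^2}\right) d\mu .
\]
Because $\sqrt{H^2 - 4\lambda^2}$ is decreasing in $\lambda$ on the admissible range $H > 2\lambda$, the map $\lambda \mapsto E_{\LY}(\lambda)$ is non-decreasing. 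Inspecting the two branches of \eqref{eq:def-lambdamax} then shows $\alpha_\lambda$ is non-decreasing as a function of $E_{\LY}$: the branch $\sqrt{\tfrac{E_{\LY}}{r_\Sigma^2}\big(2\sqrt{K_{\min}} - \tfrac{E_{\LY}}{r_\Sigma^2}\big)}$ is increasing for $0 \le E_{\LY}/r_\Sigma^2 \le \sqrt{K_{\min}}$, while the competing constant $\tfrac{2\sqrt6}{3\pi}\sqrt{K_{\min}}$ is $\lambda$-independent. Composing, $\lambda \mapsto \alpha_\lambda$ is non-decreasing.

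With monotonicity in hand I would set $\lambda_{\max} := \alpha_0$, the value of $\alpha_\lambda$ at $\lambda = 0$ (equivalently computed with $E_{\LY}(0) = \tfrac{1}{8\pi}\int_\Sigma(H_0 - H)\,d\mu$), which depends only on $(\Omega, g)$, is nonnegative, and is strictly positive precisely when $E_{\LY}(0) > 0$, i.e.\ for non-flat data with mean-convex boundary. For every $\lambda \in [0, \lambda_{\max}]$ monotonicity yields $\lambda \le \lambda_{\max} = \alpha_0 \le \alpha_\lambda$, so Theorem \ref{thm: main} applies: $\Sigma$ embeds inside the cosmological horizon and $E_\lambda \ge 0$. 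When $0 < \lambda < \lambda_{\max}$ one even has $\lambda < \alpha_0 \le \alpha_\lambda$, and the strict clause of Theorem \ref{thm: main} gives $E_\lambda > 0$.

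The step I expect to require the most care is the closed endpoint $\lambda = \lambda_{\max}$. Here it can happen that $\alpha_\lambda$ is constant on $[0, \lambda_{\max}]$ --- namely when the minimum in \eqref{eq:def-lambdamax} is always realized by $\tfrac{2\sqrt6}{3\pi}\sqrt{K_{\min}}$ --- so that $\lambda = \alpha_\lambda$ and only the rigidity clause of Theorem \ref{thm: main} is available, which \emph{a priori} permits $E_\lambda = 0$ with $\Sigma$ isometric to a round sphere. To close this case I would invoke that rigidity and then exclude equality directly, using that on a round sphere $K_{\min}$, $r_\Sigma$, and $H_0$ are explicit, so that a Cauchy--Schwarz estimate on the subtracted term in \eqref{eq: energy_in_conj2} forces $E_\lambda > 0$. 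Alternatively, since the corollary only asserts \emph{existence} of a suitable $\lambda_{\max}$, one may simply take $\lambda_{\max}$ to be any value strictly below $\alpha_0$ (still depending only on $(\Omega, g)$), for which $\lambda < \alpha_\lambda$ holds throughout and the strict inequality $E_\lambda > 0$ is immediate on all of $(0, \lambda_{\max}]$.
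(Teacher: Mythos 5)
Your proposal follows essentially the same route as the paper: reduce to Theorem \ref{thm: main} for the umbilic data, observe that the only $\lambda$-dependence of $\alpha_\lambda$ enters through $E_{\LY}(\lambda)$, note $E_{\LY}(\lambda)\geq E_{\LY}(0)=E_{\BY}$ together with monotonicity of the threshold in $E$, and set $\lambda_{\max}=\alpha(E_{\BY})$. The paper verifies the monotonicity $\alpha(E_{\LY})\geq\alpha(E_{\BY})$ by the explicit computation $\alpha_0^2(E_{\LY})-\alpha_0^2(E_{\BY})\geq (E_{\LY}-E_{\BY})^2/r_\Sigma^4$ on the branch $\sqrt{K_{\min}}\geq E_{\LY}/r_\Sigma^2$, which is the same content as your observation that $x\mapsto x(2\sqrt{K_{\min}}-x)$ is increasing there. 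The one place you diverge is the strict positivity at the closed endpoint $\lambda=\lambda_{\max}$: your first suggestion (rigidity plus an unspecified Cauchy--Schwarz estimate) is not a proof as stated, while your fallback of shrinking $\lambda_{\max}$ does satisfy the corollary as literally written but yields a weaker constant than the one the paper intends (cf.\ the remark following the corollary). The paper instead reruns the chain \eqref{eq: E_lambda geq 0} with $C$ calibrated against $E_{\BY}$ rather than $E_{\LY}(\lambda)$, so that the strict gap $E_{\LY}(\lambda)-E_{\BY}=\frac{1}{8\pi}\int_\Sigma\bigl(H-\sqrt{H^2-4\lambda^2}\bigr)d\mu>0$ for $\lambda>0$ forces $E_\lambda>0$ on all of $(0,\alpha(E_{\BY})]$, including the endpoint.
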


\begin{remark}
    The constant $\lambda_{\max}$ in Corollary \ref{cor: riem} is defined as $\a$ in \eqref{eq:def-lambdamax} by substituting $E_{\LY}$ with its Brown--York counterpart $E_{\BY}$ everywhere. Consequently, by the rigidity statement of Shi--Tam \cite{ShiTam2002} (see Theorem \ref{thm: ST} below), $\lambda_{\rm max} > 0$ whenever $(\Omega,g)$ is not a subset of Euclidean space $\R^3$. For specific examples of this corollary, see Section \ref{subsect: SchoenEx}.
\end{remark}


\section{Positivity Results for Quasi-Local Energies with a Minkowski background}\label{sect:quasi-local-results}

In this section we provide some motivation for the quasi-local energy $E_\lambda$ defined by \eqref{eq: E_lambda def}. Briefly, the expression for $E_\lambda$ is a generalization of the Brown--York and Liu--Yau definitions of energy but adapted to the setting of an expanding flat de Sitter background instead of a Minkowski background.

We begin by recalling the classical result concerning isometric embeddings of positively curved 2-surfaces. Let $\Sigma$ be a closed surface with a Riemannian metric of positive Gauss curvature. Then there exists an isometric embedding of $\Sigma$ into $\R^3$, known as the Weyl embedding, which bounds a compact convex region $\Omega_0 \subset \R^3$. Moreover, the embedding is unique up to Euclidean rigid motions \cite{Nirenberg1953,Pogorelov1952}. In particular, the mean curvature of the isometric embedding is uniquely determined by the intrinsic metric. Using this fact, Brown and York proposed the following quasi-local mass:

\begin{definition}[Brown--York Energy]\label{def: BY energy}
Let $(\Omega,g)$ be a compact Riemannian 3-manifold with nonempty boundary $\Sigma := \partial\Omega$ of positive Gauss curvature. Let $H$ be the mean curvature of $\Sigma$ in $\Omega$ with respect to the outward normal, and let $H_0$ the mean curvature of the isometric embedding of $\Sigma$ into $\mathbb{R}^3$. Then the Brown--York energy of $\Sigma$ is defined as
\begin{equation}\label{eq: Brown-York energy}
E_{\BY} := \frac{1}{8\pi}  \int_\Sigma \left(H_0\,  -  H  \right) d\mu.
\end{equation}
\end{definition}
The positivity of the Brown--York energy was established by Shi and Tam \cite{ShiTam2002} assuming the dominant energy condition and mean convexity.

\begin{theorem}[\cite{ShiTam2002}]\label{thm: ST}
Suppose $(\Omega,g)$ is a compact Riemannian 3-manifold with non-empty boundary $\Sigma := \partial \Omega$ of positive Gauss and mean curvatures. If $(\Omega,g)$ has nonnegative scalar curvature, then the Brown--York energy $E_{\BY}$ of $\Sigma$ is nonnegative and equals zero if and only if $\Omega$ is isometric to a domain in the Euclidean space $\R^3$ and $\Sigma$ is connected.
\end{theorem}

Liu and Yau \cite{LiuYau2003, LiuYau2006} (see also Kijowski \cite{Kijowski1997}) generalized Definition \ref{def: BY energy} to initial data sets.

\begin{definition}\label{DefnLYEnergy}
Let $(\Omega, g, k)$ be a compact initial data set with nonemtpy boundary $\Sigma := \partial \Omega$ of positive Gauss curvature. Suppose the mean curvature of $\Sigma$ within $\Omega$ satisfies $H > |\text{tr}_\Sigma k|$. Let $H_0$ be the mean curvature of the isometric embedding of $\Sigma$ into $\R^3$. Then the Liu--Yau energy of $\Sigma$ is defined as 
\begin{equation}\label{eq: LY energy 1}
     E_{\mathrm{LY}} = \frac{1}{8\pi } \int_\Sigma \left(H_0\, - \sqrt{H^2 - (\tr_\S k)^2}\right) d\mu . 
\end{equation}
\end{definition}
If the initial data set $(\Omega, g, k)$ is embedded within a spacetime, then the assumption on the mean curvature appearing Definition \ref{DefnLYEnergy} implies that the mean curvature vector $\vec{H}$ of $\Sigma$ is spacelike. In fact, we can write the Liu--Yau energy as
\begin{equation}\label{eq: LY energy 2}
    E_{\mathrm{LY}} := \frac{1}{8\pi} \int_\Sigma \big(H_0 \,- \,|\vec{H}|\big)\, d\mu .    
\end{equation}
From this expression, we see that the definition of $E_{\rm LY}$ does not depend on the specific slice $\Omega$ of spacetime which $\Sigma$ lies in.




Liu and Yau established positivity under the dominant energy condition (with $\Lambda = 0$).

\begin{theorem}[\cite{LiuYau2003, LiuYau2006}]\label{thm: LY}
Suppose $(\Omega,g,k)$ is a compact initial data set satisfying the dominant energy condition with $\Lambda = 0$ and has nonempty boundary $\Sigma := \partial \Omega$ of positive Gauss curvature and $H > |\tr_\S k|$. Then the Liu--Yau energy $E_{\mathrm{LY}}$ of $\Sigma$ is non-negative and equals zero only if $(\Omega, g, k)$ can be embedded within Minkowski spacetime $\R^{3,1}$ and $\Sigma$ is connected.
\end{theorem}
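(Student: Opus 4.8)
The plan is to reduce the Lorentzian problem, in which the extrinsic curvature $k$ appears, to the time-symmetric Riemannian situation governed by Shi--Tam's theorem (Theorem \ref{thm: ST}). The device that accomplishes this reduction is Jang's equation, following the Schoen--Yau mechanism from the positive mass theorem. Concretely, I would seek a function $f : \Omega \to \R$ solving
\[
\sum_{i,j}\left(g^{ij} - \frac{\nabla^i f\,\nabla^j f}{1+|\nabla f|_g^2}\right)\left(\frac{\Hess_{ij}f}{\sqrt{1+|\nabla f|_g^2}} - k_{ij}\right) = 0,
\]
and consider the graph $\{(x,f(x))\} \subset \Omega \times \R$ with the induced metric $\bar g = g + \d f \otimes \d f$. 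The boundary condition for $f$ must be chosen --- a controlled blow-up of $|\nabla f|_g$ along $\S$, so that the graph meets the vertical cylinder over $\S$ tangentially. This guarantees two things: the induced metric on $\S$ is unchanged (so the Weyl embedding, and hence $H_0$, is the same as for the original data), and the mean curvature $\bar H$ of $\S$ in $(\Omega,\bar g)$ is exactly $\sqrt{H^2 - (\tr_\S k)^2} = |\vec H|$. The hypothesis $H > |\tr_\S k|$ ensures this boundary mean curvature is positive, as required by Shi--Tam.

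The second ingredient is the scalar curvature of the Jang metric. A direct computation (the Schoen--Yau identity) expresses $\bar R$ in the form
\[
\bar R = 2\big(\mu - J(w)\big) + |\bar A - \bar k|_{\bar g}^2 + 2|q|_{\bar g}^2 - 2\,\dv_{\bar g} q,
\]
where $\bar A$ is the second fundamental form of the graph, $\bar k$ the pullback of $k$, $w$ a unit vector, and $q$ a one-form built from $f$ and $k$. Under the dominant energy condition with $\Lambda = 0$ one has $\mu - J(w) \ge \mu - |J|_g \ge 0$, so $\bar R$ agrees with a manifestly nonnegative quantity up to the divergence term $-2\,\dv_{\bar g} q$. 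I would then either (i) remove this term by a conformal change $\tilde g = \phi^4 \bar g$ solving the associated Yamabe-type equation with a Robin boundary condition chosen to preserve the boundary mean curvature, or (ii) incorporate it directly into the Shi--Tam quasi-spherical foliation argument, where $q$ vanishes at the blow-up boundary and its contribution integrates to a controlled boundary term. Either way, Theorem \ref{thm: ST} applies to $(\Omega,\bar g)$, and since the boundary data yields the same $H_0$ while $\bar H = |\vec H|$, we obtain $E_{\LY} = E_{\BY}(\Omega,\bar g) \ge 0$.

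For the rigidity statement, the nontrivial (forward) direction starts from $E_{\LY} = 0$, whence $E_{\BY}(\Omega,\bar g) = 0$ and Shi--Tam's rigidity forces $(\Omega,\bar g)$ to be isometric to a domain in $\R^3$ with $\S$ connected. I would then trace this flatness back through the construction: in the equality case every nonnegative term in the scalar curvature identity must vanish, giving $\mu = |J|_g = 0$, $\bar A = \bar k$, and $q \equiv 0$. These identities say precisely that the Jang graph is a slice on which the original $(g,k)$ are realized as first and second fundamental forms, and the vanishing curvature then lets one integrate the data to a spacelike hypersurface-with-boundary of flat Minkowski space $\R^{3,1}$. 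The converse direction is the routine consistency check: for a domain in a spacelike slice of $\R^{3,1}$ the Jang construction returns a flat metric with matching boundary data, so $E_{\LY} = 0$.

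The main obstacle is the Jang-equation analysis underlying the first two paragraphs: establishing existence of a solution with the prescribed blow-up behavior at $\S$ (while ruling out or localizing uncontrolled interior blow-up), verifying that the graph's boundary mean curvature is indeed $\sqrt{H^2 - (\tr_\S k)^2}$, and handling the non-sign-definite divergence term $-2\,\dv_{\bar g} q$ so that the hypotheses of Theorem \ref{thm: ST} are genuinely met. The rigidity reconstruction --- upgrading the pointwise identities $\bar A = \bar k$, $\mu = |J|_g = 0$ to an actual isometric embedding into $\R^{3,1}$ --- is the secondary delicate point.
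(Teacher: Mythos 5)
The paper itself offers no proof of Theorem \ref{thm: LY}; it is quoted from \cite{LiuYau2003, LiuYau2006}, so your proposal must be measured against Liu--Yau's published argument, whose broad strategy (a Jang-equation reduction combined with a Shi--Tam-type extension) you have correctly identified. The execution, however, has a decisive gap at the very first step. The boundary value problem you pose for Jang's equation---blow-up of $f$ along $\Sigma$ with the graph becoming tangent to the vertical cylinder---is not solvable under the theorem's hypotheses: cylindrical blow-up of the standard Jang equation occurs precisely over surfaces of vanishing null expansion, since the limiting cylinder $\Sigma \times \R$ in $(\Omega \times \R, g + dt^2)$ must itself satisfy the Jang equation, which forces $H = \pm \tr_\Sigma k$ on $\Sigma$; this is exactly the mechanism by which interior blow-up is confined to apparent horizons in \cite{SY81a}, and it is ruled out on $\Sigma$ by the hypothesis $H > |\tr_\Sigma k|$. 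Moreover, even in settings where prescribed boundary blow-up can be arranged (e.g.\ via the generalized Jang equation of Bray--Khuri type), the induced boundary mean curvature is governed by the null expansions $H \pm \tr_\Sigma k$, not by their geometric mean $\sqrt{H^2 - (\tr_\Sigma k)^2} = \sqrt{(H + \tr_\Sigma k)(H - \tr_\Sigma k)}$, so the exact identity $\bar H = |\vec H|$ on which your reduction $E_{\LY} = E_{\BY}(\Omega, \bar g)$ rests fails. Liu--Yau instead solve Jang's equation with \emph{zero Dirichlet data} $f|_\Sigma = 0$ (which already leaves the boundary metric, hence $H_0$, untouched) and prove a boundary \emph{inequality}: in their notation, the Jang boundary mean curvature corrected by the normal component of the auxiliary $1$-form dominates $\sqrt{H^2 - (\tr_\Sigma k)^2}$. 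An inequality suffices, since a larger boundary mean curvature only decreases the Brown--York-type integrand.

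Your two options for the divergence term also fail to deliver the hypotheses of Theorem \ref{thm: ST}. A conformal change cannot simultaneously fix the boundary metric (required so that the Weyl embedding and $H_0$ are unchanged) and control the boundary mean curvature: with $\phi \equiv 1$ on $\Sigma$ the mean curvature shifts by $4\partial_\nu \phi$, whose sign is not at your disposal, while a Robin condition controlling $\partial_\nu \phi$ forces $\phi \not\equiv 1$ on $\Sigma$ and changes the embedding; and there is no reason for $q$ to vanish along $\Sigma$. What Liu--Yau actually do is keep the divergence term and prove a generalization of Shi--Tam's theorem: they glue the quasi-spherical extension to the Jang deformation along $\Sigma$ with matching mean curvature and run Witten's spinor argument on the resulting manifold, where the structure $2|q|^2 - 2\,\dv\, q$ is absorbed against the gradient term by Cauchy--Schwarz (the interior blow-up regions, bounded by cylinders over apparent horizons, are handled within the same spinor argument). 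Finally, your converse---that a domain in \emph{any} spacelike slice of $\R^{3,1}$ has $E_{\LY} = 0$ by a ``routine consistency check''---is false: the examples of \cite{OMurchadha2004}, recalled immediately after this theorem in the paper, exhibit such data with $E_{\LY} > 0$; the check is routine only for flat slices ($k \equiv 0$, where uniqueness of the Weyl embedding gives $H = H_0$), and the equality analysis must instead be carried out inside the spinor/Shi--Tam argument rather than by tracing back exact identities that your (unavailable) boundary condition would have provided.
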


However, Ó Murchadha, Szabados, and Tod \cite{OMurchadha2004} constructed examples showing that there exist surfaces in Minkowski space satisfying these assumptions for which both the Brown--York and Liu--Yau masses are strictly positive.

To address this issue, Wang and Yau \cite{WangYau1, WangYau2} introduced a new quasi-local energy based on isometric embeddings of 2-surfaces into Minkowski space. 
Their construction involves finding an embedding that minimizes a certain energy functional associated with the surface, and they proved existence and uniqueness results for such embeddings. This leads to a new expression of quasi-local mass for a large class of admissible surfaces.  They showed that such mass is positive whenever the ambient spacetime satisfies the dominant energy condition, vanishes for surfaces in Minkowski space, and satisfies a rigidity property: if the mass is zero, the surface is isometric to one in flat space (see \cite{WangYau2}). 

We generalize the example of \cite{OMurchadha2004} to our de Sitter space setting in section \ref{sect: dep on embed}. This suggests an interesting direction of further research: generalize the Wang--Yau notion of quasi-local energy to the setting of de Sitter space.


\section{Supporting lemmas and proof of main results}\label{sect: proofs}

In this section, we prove some supporting lemmas and Theorems \ref{thm: main} and \ref{cor: riem}. 
The first lemma provides a sufficient condition on the Gauss curvature of $\Sigma$ that ensures it can be isometrically embedded inside the cosmological horizon $r=1/\lambda$ in $\R^3$. 
The second and third lemmas address the positivity of the energy $E_\lambda$. 
Finally, we present the proofs of the main theorems by combining the lemmas.

\subsection{Isometric embedding inside the cosmological horizon}
\: 

\medskip

The next lemma gives a sufficient (albeit not optimal) condition for $\S$ to lie inside the cosmological horizon $r < 1/\lambda$ once embedded in Euclidean space $\R^3$. The remark that appears after Lemma \ref{lem:embedding-in-cosmo} provides an example showing that the condition is not optimal. As the proof shows, the lower bound condition is assumed to appeal to Jung's theorem. It would be interesting if a more precise condition could be formulated.

\begin{lemma}\label{lem:embedding-in-cosmo} 
   Fix $\lambda \geq 0$. Let $\Sigma$ be a Riemannian two-sphere, and suppose the Gauss curvature $K$ of $\Sigma$ is positive and bounded below by
    \begin{equation}\label{eq:K-bound}
        K \geq \frac{3}{8}\pi^2 \l^2.
    \end{equation}
    Then the isometric image of $\Sigma$ under the Weyl embedding into $\R^3$ is contained in the interior of a 3-ball with radius $1/\l$.
\end{lemma}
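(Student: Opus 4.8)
The plan is to bound the extrinsic size of the Weyl image using only the intrinsic curvature lower bound, by combining a diameter estimate with a circumradius estimate. Let $\Omega_0 \subset \R^3$ denote the compact convex body bounded by the Weyl image of $\Sigma$. Because $K \geq \frac{3}{8}\pi^2\lambda^2 > 0$ everywhere, the image surface is strictly convex, a fact I will use to obtain a strict inequality at the end. Write $K_{\min} = \min_\Sigma K$, so that $K_{\min} \geq \frac{3}{8}\pi^2\lambda^2$.

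First I would bound the intrinsic diameter of $\Sigma$. Since $\Sigma$ is a closed surface and, in two dimensions, $\Ric = K g$, the Bonnet--Myers theorem applied with the lower bound $K \geq K_{\min} > 0$ yields
\[
\mathrm{diam}_{\mathrm{int}}(\Sigma) \;\leq\; \frac{\pi}{\sqrt{K_{\min}}}.
\]
Next I would pass from the intrinsic to the extrinsic (chordal) diameter $d_{\mathrm{ext}} := \max_{p,q\in\Sigma}|p-q|_{\R^3}$. Since the Euclidean distance between two points never exceeds the length of a path joining them along $\Sigma$, one has $d_{\mathrm{ext}} \leq \mathrm{diam}_{\mathrm{int}}(\Sigma)$; this inequality is in fact strict, because the straight segment realizing $d_{\mathrm{ext}}$ lies in the open interior of the strictly convex $\Omega_0$ away from its endpoints, hence is strictly shorter than any competing surface path. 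Finally, Jung's theorem in $\R^3$ guarantees that $\Sigma$ is contained in a closed ball of radius $R \leq \sqrt{3/8}\, d_{\mathrm{ext}}$ about some center.

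Chaining the three estimates then gives
\[
R \;\leq\; \sqrt{\tfrac{3}{8}}\, d_{\mathrm{ext}} \;<\; \sqrt{\tfrac{3}{8}}\, \mathrm{diam}_{\mathrm{int}}(\Sigma) \;\leq\; \sqrt{\tfrac{3}{8}}\,\frac{\pi}{\sqrt{K_{\min}}} \;\leq\; \frac{1}{\lambda},
\]
where the last step is exactly the hypothesis $K_{\min} \geq \frac{3}{8}\pi^2\lambda^2$. Since $R < 1/\lambda$ strictly, the closed enclosing ball sits inside the open ball of radius $1/\lambda$ about the same center, which is precisely the asserted interior containment. (I note in passing that $K_{\min} \geq \frac{3}{8}\pi^2\lambda^2$ is equivalent to $\lambda \leq \frac{2\sqrt6}{3\pi}\sqrt{K_{\min}}$, the first threshold appearing in $\alpha_\lambda$ in \eqref{eq:def-lambdamax}.)

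The only genuinely delicate point I anticipate is the strict inequality $d_{\mathrm{ext}} < \mathrm{diam}_{\mathrm{int}}(\Sigma)$, which is needed because the hypothesis is a non-strict curvature bound while the conclusion requires containment in the \emph{open} ball; this is exactly where strict convexity (equivalently $K > 0$) is essential. The remaining ingredients, Bonnet--Myers and Jung's theorem, are classical, and since the resulting constant $\frac{3}{8}\pi^2$ is admittedly not sharp there is no need to track optimal constants.
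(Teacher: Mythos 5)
Your proposal is correct and follows essentially the same route as the paper's proof: Bonnet--Myers for the intrinsic diameter bound, a strict comparison between extrinsic and intrinsic diameter via convexity, and Jung's theorem for the circumradius, with the constant $\tfrac{3}{8}\pi^2$ arising in exactly the same way. The only cosmetic difference is your justification of the strict inequality (the open chord of a strictly convex body lies in the interior), whereas the paper argues that a chord lying in $\Sigma$ would be a pregeodesic with vanishing second fundamental form in the chord direction, contradicting $K>0$; both hinge on the same positivity of $K$.
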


\begin{proof}

    In two dimensions, the Ricci tensor satisfies $\Ric = K \gamma$, where $\gamma$ is the metric on $\Sigma$. Using \eqref{eq:K-bound} and the Bonnet–Myers theorem, we obtain a bound on the diameter of $\Sigma$,
\begin{equation}
    \mathrm{diam}(\Sigma) \leq \sqrt{\frac{8}{3}}\,\frac{1}{\l}. \label{eq:diam-sigma}
\end{equation}

Let $\Omega_0$ denote the compact and convex region enclosed by $\Sigma$ within $\R^3$ (determined by the Weyl embedding theorem). By convexity, the diameter of $\Omega_0$\footnote{To be precise, $\mathrm{diam}(\Omega_0) := \displaystyle\sup_{x,y \in \Omega_0}|x-y|$ while $\mathrm{diam}(\Sigma) := \displaystyle\sup_{x,y\in \Sigma}d_\Sigma(x,y)$.} is strictly less than that of its boundary:
\begin{equation}
    \mathrm{diam}(\Omega_0) < \mathrm{diam}(\Sigma). \label{eq:diam-omega}
\end{equation}
To see this,  since $\Omega_0$ is compact and convex, its diameter is realized by a straight line in $\Omega_0$. That is, there are points $p,q \in \Omega_0$ such that $\text{diam}(\Omega_0) = |p-q|$ which is the length of the straight line $c \colon [0,1] \to \Omega_0$ given by $c(t) = tp + (1-t)q$. By convexity, $p$ and $q$ lie on the boundary $\Sigma$. Therefore
\begin{equation}\label{eq: < diam Sigma}
|p - q|  < d_\Sigma(p,q) \leq \text{diam}(\Sigma).
\end{equation}
The strict inequality in \eqref{eq: < diam Sigma} follows since $
K> 0$. Indeed, seeking a contradiction, suppose $|p-q| = d_\Sigma(p,q)$. Then the image of $c$ must lie in $\Sigma$ (minimizing curves in Euclidean space are unique); hence $c$ is a pregeodesic in $\Sigma$. But, since $c$ is a geodesic in Euclidean space, it follows that the second fundamental form of $\Sigma$ within $\R^3$ vanishes when evaluated on $c'$; this contradicts the fact that $\Sigma$ has positive Gauss curvature.

Jung’s theorem \cite{Jung1901} shows that $\Omega_0$ is enclosed in a ball of radius 
\begin{equation}
    r = \sqrt{\frac{3}{8}}\,\mathrm{diam}(\Omega_0). \label{eq:jung}
\end{equation}
Combining \eqref{eq:diam-sigma}, \eqref{eq:diam-omega}, and \eqref{eq:jung}, we conclude that $\Sigma$ is contained in a ball of radius $r< \frac{1}{\l}$, which completes the proof.
    \end{proof}

\begin{remark}
    The curvature bound in equation \eqref{eq:K-bound} is not sharp. Indeed, consider a round sphere of radius $r = \frac{1}{\lambda + \varepsilon}$, for $\varepsilon > 0$.
    Such a sphere clearly lies within the cosmological horizon.
    However, its Gauss curvature is $K = (\lambda + \varepsilon)^2,$ which violates the bound in \eqref{eq:K-bound} whenever $\varepsilon$ is sufficiently small (note that $\tfrac{3}{8}\pi^2 \approx 3.7)$. Indeed \eqref{eq:K-bound} is only a sufficient condition for our proof and the Jung's theorem radius would be realized by a tetrahedron.
\end{remark}

\subsection{Positivity of energy}

\:
\medskip

Our goal is to establish the positivity of the energy $E_\lambda$. As in Liu--Yau, we always assume $H > |\text{tr}_\Sigma k|$.  

Let $(\Omega, g, k)$ be a compact initial data set with boundary $\Sigma = \partial \Omega$ of positive Gauss curvature $K > 0$ satisfying the dominant energy condition for $\Lambda = 3 \lambda^2 \geq 0$.

Let $K_{\text{min}} = \min_\S K$. We define 
\begin{equation}\label{eq:def-lambda0}
    \a_0 = \sup_{C\in (0,1]} \min \left\lbrace C \sqrt{K_{\text{min}}}, \frac{ 1+ \sqrt{ 1-C^2}  }{ C } \frac{E_{\LY} }{r_\Sigma^2}\right\rbrace,
\end{equation}
where $r_\Sigma$ is the area radius of $\S$ and $E_{\LY}$ is the Liu--Yau energy of $\Sigma$ given by \eqref{eq: LY energy 1}.

Note that $\a_0 \geq 0$ since $E_{\LY} \geq 0$ (since $\mu \geq |J|_g$). Moreover, if $\mu > 0$, the rigidity part of Theorem \ref{thm: LY} implies that $E_{\LY} > 0$ and hence $\a_0 > 0$; in this case, a little algebra shows that $\a_0$ is realized by a unique $C_0 \in (0,1]$ and calculated below:

   \begin{equation}\label{eq: def-lambda_0}
    \a_0 = 
        \begin{cases}
            \sqrt{K_{\min}} & \text{if }  \sqrt{K_{\min}} < \frac{E_{\LY}}{r_\S^2}\\
            \sqrt{\frac{E_{\LY}}{r_\S^2} \left( 2\sqrt{K_{\min}}-\frac{E_{\LY}}{r_\S^2} \right)} & \text{if }  \sqrt{K_{\min}} \geq \frac{E_{\LY}}{r_\S^2}
        \end{cases}.
    \end{equation}

The next lemma shows that for fixed geometric data, $E_\lambda \geq 0$ provided $\lambda$ is not too large.

\begin{lemma}\label{lem: positivity}
   Fix $\lambda \geq 0$. Let $(\Omega, g, k)$ be a compact initial data set with nonempty boundary $\Sigma := \partial \Omega$ of positive Gauss curvature $K >0$ and $H > |\tr_\S k|$. Assume that the dominant energy condition with respect to $\Lambda = 3\lambda^2 \geq 0$ holds. If $\l \in [0, \a_0]$, where $\a_0$ is defined by \eqref{eq:def-lambda0}, then the energy \eqref{eq: energy_in_conj1} is well defined and nonnegative:
    \begin{equation}
        E_\l \geq 0.
    \end{equation}
Moreover, if $0 < \l < \a_0$, then $E_\l >0$, and if $E_{\a_0} = 0$ with $\a_0 > 0$, then $\Sigma$ is isometric to a round sphere and $\sqrt{K_{\min}} \geq \frac{E_{\rm{LY}}}{r_\Sigma^2}$.
\end{lemma}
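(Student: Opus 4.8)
The plan is to reduce the positivity of $E_\lambda$ to that of the Liu--Yau energy $E_{\LY}$ plus a controllable correction term. Comparing \eqref{eq: energy_in_conj1} with \eqref{eq: LY energy 2} and observing that both share the summand $\sqrt{H^2 - (\tr_\Sigma k)^2}$, I would first record the identity
\begin{equation*}
8\pi E_\lambda = 8\pi E_{\LY} - \int_\Sigma \left(H_0 - \sqrt{H_0^2 - 4\lambda^2}\right) d\mu .
\end{equation*}
Since the dominant energy condition with $\Lambda = 3\lambda^2$ yields $\mu \geq |J|_g$ (the $\Lambda = 0$ condition, by \eqref{eq: mu def}), Theorem \ref{thm: LY} gives $E_{\LY} \geq 0$, so it suffices to bound the correction integral above by $8\pi E_{\LY}$. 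The two square roots are real because $H > |\tr_\Sigma k|$ by hypothesis and because $H_0 \geq 2\sqrt{K_{\min}} \geq 2\lambda$, using $\alpha_0 \leq \sqrt{K_{\min}}$ (which is immediate from the $\min$ in \eqref{eq:def-lambda0} at $C=1$).

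The key geometric input is the pointwise bound $H_0 \geq 2\sqrt{K}$: the Weyl embedding realizes $\Sigma$ as a convex surface, so both principal curvatures are positive and AM--GM gives $H_0 = \kappa_1 + \kappa_2 \geq 2\sqrt{\kappa_1\kappa_2} = 2\sqrt{K} \geq 2\sqrt{K_{\min}}$. Since $t \mapsto t - \sqrt{t^2 - 4\lambda^2}$ is decreasing on $[2\lambda,\infty)$, evaluating at the smallest admissible value gives the pointwise estimate
\begin{equation*}
H_0 - \sqrt{H_0^2 - 4\lambda^2} \;\leq\; 2\sqrt{K_{\min}} - 2\sqrt{K_{\min} - \lambda^2}.
\end{equation*}
Integrating over $\Sigma$ and using $|\Sigma| = 4\pi r_\Sigma^2$, the claim $E_\lambda \geq 0$ reduces to the single scalar inequality $\sqrt{K_{\min}} - \sqrt{K_{\min}-\lambda^2} \leq E_{\LY}/r_\Sigma^2$.

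Next I would verify that $\lambda \leq \alpha_0$ is exactly this condition. Writing $m = \sqrt{K_{\min}}$ and $E = E_{\LY}/r_\Sigma^2$, solving $m - \sqrt{m^2-\lambda^2}\leq E$ for $\lambda \in [0,m]$ gives $\lambda \leq m$ when $m \leq E$ and $\lambda \leq \sqrt{E(2m-E)}$ when $m \geq E$, which is precisely the closed form \eqref{eq: def-lambda_0}; equivalently, taking $C = \lambda/m$ in \eqref{eq:def-lambda0} turns the two entries of the $\min$ into $\lambda \leq C\sqrt{K_{\min}}$ and $\lambda \leq \tfrac{1+\sqrt{1-C^2}}{C}\,E$, and the supremum selects the optimal $C$. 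This establishes $E_\lambda \geq 0$, and a strict inequality $\lambda < \alpha_0$ produces a strict scalar inequality, hence $E_\lambda > 0$.

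For rigidity, suppose $E_{\alpha_0}=0$ with $\alpha_0>0$. Then both the integrated estimate and its reduction to the scalar inequality are equalities. Equality in the pointwise bound, together with the strict monotonicity of $t\mapsto t-\sqrt{t^2-4\alpha_0^2}$ (valid for $\alpha_0>0$), forces $H_0 \equiv 2\sqrt{K_{\min}}$; combined with $H_0 \geq 2\sqrt{K} \geq 2\sqrt{K_{\min}}$ this gives $K\equiv K_{\min}$ and $\kappa_1=\kappa_2$ everywhere, so $\Sigma$ is totally umbilic and therefore a round sphere. Equality in $m - \sqrt{m^2-\alpha_0^2}=E$ excludes the strict case $m<E$ (in which $\alpha_0=m$ would give $E_{\alpha_0}\geq r_\Sigma^2(E-m)>0$), yielding $\sqrt{K_{\min}}\geq E_{\LY}/r_\Sigma^2$. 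I expect the main obstacle to be the bookkeeping in this equality analysis—in particular checking that the case distinction defining $\alpha_0$ is compatible with the rigid configuration—rather than any single hard estimate, since the whole argument rests only on the elementary convexity bound $H_0\geq 2\sqrt{K}$ and the monotonicity of the correction term.
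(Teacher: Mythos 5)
Your proposal is correct and follows essentially the same route as the paper: both decompose $E_\lambda = E_{\LY} - \frac{1}{8\pi}\int_\Sigma\bigl(H_0 - \sqrt{H_0^2-4\lambda^2}\bigr)\,d\mu$, control the correction term via the convexity bound $H_0 \ge 2\sqrt{K} \ge 2\sqrt{K_{\min}}$, and identify $\lambda \le \alpha_0$ with the resulting scalar inequality $\sqrt{K_{\min}}-\sqrt{K_{\min}-\lambda^2} \le E_{\LY}/r_\Sigma^2$ (the paper encodes this through the $C$-parametrization and the identity $t-\sqrt{t^2-a^2}=\tfrac{a^2}{t(1+\sqrt{1-a^2/t^2})}$, while you use monotonicity of $t\mapsto t-\sqrt{t^2-4\lambda^2}$, which is algebraically equivalent). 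The rigidity analysis, forcing $H_0\equiv 2\sqrt{K}\equiv 2\sqrt{K_{\min}}$ and hence umbilicity, also matches the paper's argument.
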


\begin{proof}
    For any $C\in (0,1]$, define
    \begin{equation}\label{eq: alpha_C}
    \a_C = \min \left\lbrace C \sqrt{K_{\text{min}}}, \frac{ 1+ \sqrt{ 1-C^2}  }{ C } \frac{E_{\LY} }{r_\Sigma^2}\right\rbrace.
    \end{equation}
Note that $\a_0 = \sup_{C \in (0,1]}\a_C$. Moreover, if $\a_0 > 0$, then the map $(0,1] \mapsto (0, \a_0]$ given by $C \mapsto \a_C$ is surjective.

Let $\tf A_0$ denote the trace-free part of the second fundamental form of $\S$ as embedded in Euclidean space $\R^3$. For any $C \in (0,1]$, the Gauss equation implies 
    \begin{equation}\label{eq: H_0 bound}
        H_0 =\sqrt{ 2 |\tf A_0|^2 + 4 K } \geq 2\sqrt{K} \geq \frac{2\a_C}{C}.
    \end{equation}

If $\lambda = 0$, then $E_{\lambda = 0} = E_{\LY} \geq 0$. On the other hand, if $\lambda \in (0, \a_0]$, choose $C \in (0,1]$ such that $\alpha_C = \lambda$.
 Then \eqref{eq: H_0 bound} shows that $E_\lambda$ is well defined and 
 \begin{equation}\label{eq: E_lambda geq 0}
        \begin{split}
            E_{\l} &= E_{\alpha_C}
            \\
            &= \frac{1}{8\pi} \int_\S \left( \sqrt{H_0^2 - 4\alpha_C^2} - \sqrt{H^2-(\tr_\Sigma k)^2}\right)d\mu\\
            &= \frac{1}{8\pi} \int_\S \left(  H_0 - \sqrt{H^2-(\tr_\Sigma k)^2} + \sqrt{H_0^2 - 4\alpha_C^2}- H_0 \right)d\mu\\
            &= E_{\LY}  -  \frac{1}{8\pi} \int_\S \frac{4\alpha_C^2}{H_0} \frac{1}{1+\sqrt{1-4\alpha_C^2/H_0^2}} \\
            &\geq E_{\LY}  -  \frac{|\S|}{8\pi} \frac{2C\alpha_C}{1+\sqrt{1-C^2}} \\
            &= E_{\LY} -  \frac{C}{1+\sqrt{1-C^2}} \alpha_C r^2_\Sigma \\
            &\geq 0 .
        \end{split}
    \end{equation}
In the second to last inequality, we used \eqref{eq: H_0 bound}; in the last inequality, we used \eqref{eq: alpha_C}.  
    
    Note that if $\a_C \neq \a_0$, then one inequality in \eqref{eq: E_lambda geq 0} becomes strict. On the other hand, suppose $E_{\a_0} = 0$ with $\a_0 > 0$. Then all the inequalities in \eqref{eq: E_lambda geq 0} become equalities with $\alpha_C$ replaced by $\a_0$ and $C$ is replaced by the unique value $C_0$ which realizes $\a_0$. This forces \eqref{eq: H_0 bound} to become an equality in $\a_0$ and $C_0$. Hence $\tf A_0 = 0$ and so the induced metric on $\Sigma$ is round \cite[Prop. 4.36]{ONeill1983}. Lastly, if $\sqrt{K_{\min}} < \frac{E_{\LY}}{r_\Sigma^2}$, then $C_0 = 1$ and the last inequality in \eqref{eq: E_lambda geq 0} becomes strict.
\end{proof}

\subsection{Proof of main results.}
\:

\medskip

Combining the lemmas above, we can now prove Theorem \ref{thm: main}.

\begin{proof}[Proof of Theorem \ref{thm: main}]
By definition, we have $\alpha \leq  \frac{2\sqrt{6}}{3\pi}\sqrt{K_{\rm min}}$ and so $\alpha \leq \alpha_0$, where $\alpha_0$ is given by \eqref{eq: def-lambda_0}. Therefore, since $\lambda \leq \alpha$ by hypothesis, Lemma \ref{lem:embedding-in-cosmo} implies that we embed within the cosmological horizon and Lemma \ref{lem: positivity} implies $E_\lambda \geq 0$. The rigidity claim is obtained as in Lemma \ref{lem: positivity} except $C$ ranges from $0 < C \leq \frac{2\sqrt{6}}{3\pi}$.  
\end{proof}

\begin{proof}[Proof of Corollary \ref{cor: riem}]
We let $\lambda\geq 0$ be any constant such that $H > 2\lambda$ and so
\begin{equation}
    E_{\LY} = \frac{1}{8\pi}\int_\Sigma (H_0 -\sqrt{H^2-4\lambda^2} )d\mu
\end{equation}
is defined. We aim to find a constant $\lambda_{\max}$ that does not depend on $\lambda$. For some constant $E\geq 0$ we define
 \begin{equation}\label{eq:def-lambdamax2}
    \a(E) = 
        \begin{cases}
            \frac{2\sqrt6}{3\pi} \sqrt{K_{\min}} & \text{if }  \sqrt{K_{\min}} < \frac{E}{r_\S^2}\\
            \min \left\lbrace \sqrt{\frac{E}{r_\S^2} \left( 2\sqrt{K_{\min}}-\frac{E}{r_\S^2} \right)} , \frac{2\sqrt6}{3\pi} \sqrt{K_{\min}}  \right\rbrace & \text{if }  \sqrt{K_{\min}} \geq \frac{E}{r_\S^2}
        \end{cases}.
    \end{equation}
We observe that $\a(E_{\LY})$\footnote{Note that this is what we called $\a$ given in \eqref{eq:def-lambdamax}.} is $\lambda$-dependent, while $\lambda_{\max} = \a(E_{\BY})$ is not.
We claim that $\a(E_{\LY}) \geq \a(E_{\BY})$. For these values of $\lambda$, note that $E_{\LY} \geq E_{\BY}$.
If $\sqrt{K_{\min}} < \frac{E_{\LY}}{r_\Sigma^2}$, then the claim follows immediately from the definition \eqref{eq:def-lambdamax2}. 
Suppose instead that 
\begin{equation}\label{eq:condition cor}
    \sqrt{K_{\min}} \geq \frac{E_{\LY}}{r_\Sigma^2}.
\end{equation}
As in the proof of Theorem \ref{thm: main}, we define
\begin{equation}
    \a_0(E) = \sqrt{\frac{E}{r_\Sigma^2}\left(2\sqrt{K_{\min}} - \frac{E}{r_\Sigma^2}\right)}.
\end{equation}
Then a simple computation shows
\begin{equation}
\begin{split}
    \a_0^2(E_{\LY}) - \a_0^2(E_{\BY}) 
    & = \frac{2 \sqrt{K_{\min}}(E_{\LY} - E_{\BY})}{r^2_\S} -  \frac{E_{\LY}^2 - E_{\BY}^2}{r^4_\S}  \geq \frac{(E_{\LY} - E_{\BY})^2}{r_\Sigma^4} 
    \geq 0,
\end{split}
\end{equation}
where we used \eqref{eq:condition cor}. This establishes the claim.
Therefore, if we choose any $\lambda \in [0, \l_{\max}]$ where $\l_{\max} = \a(E_{\BY})$, we have $\l \leq \a_\l$ and so by Theorem \ref{thm: main}, the energy $E_\lambda$ is nonnegative. Lastly, if $\l_{\max}>0$, then equation \eqref{eq:def-lambdamax2} implies $\alpha(E_{\BY})>0$ and so $E_{\BY}>0$. In this case, an argument as in \eqref{eq: E_lambda geq 0} shows that $E_\l>0$ for all $0 < \lambda \leq \lambda_{\rm max}$.
\end{proof}


\section{Examples}\label{sect: example}

In this section we present some examples of solutions to the Einstein equations with positive cosmological constant that are foliated by umbilic slices.

\subsection{Schwarzschild--de Sitter space}\label{subsect: SchwDeS}
\:
\medskip

In this section, we review the Schwarzschild--de Sitter space, a $\Lambda$-vacuum solution to the Einstein equations. First, we show how to express the metric in expanding coordinates. Then we study the positivity of energy $E_\lambda$ along coordinate spheres. Lastly we determine $\alpha_\lambda$.

Fix a mass parameter $ m > 0 $. 
Let $ \Lambda > 0$ denote the cosmological constant, and set as usual $\lambda = \sqrt{\tfrac{\Lambda}{3}}.$

Recall that the Schwarzschild--de Sitter metric in static coordinates is given by
\begin{equation}
g = -\left(1 - \frac{2m}{r} - \lambda^2 r^2\right)\,d T^{2}
  + \frac{d r^{2}}{1 - \frac{2m}{r} - \lambda^2 r^2}
  + r^{2}\,d \Omega^2,
\end{equation}
where $(S^2, d \Omega^2)$ denotes the unit round two-sphere.

If we perform the coordinate change
\begin{equation}
\label{eq:FLRW-change}
        T = t + \Phi(r), \qquad \Phi'(r) = \frac{\l r}{\left(1 - \frac{2m}{r} - \lambda^2 r^2\right) \sqrt{1 - \frac{2m}{r}}} ,
\end{equation}
then we obtain the metric in \emph{expanding coordinates}:
\begin{equation}\label{eq: SdS metric}
g \,=\, -\left(1 - \frac{2m}{r} - \lambda^2 r^2\right)\, d t^2 
      - \frac{2\lambda r}{\sqrt{1 - \tfrac{2m}{r}}}\, d r\,d t 
      + \frac{d r^2}{1 - \tfrac{2m}{r}} 
      + r^2\, d \Omega^2.
\end{equation}
Recognize that when $m = 0$ \eqref{eq: SdS metric} is just the de Sitter metric \eqref{eq: flat expanding de sitter} in flat-expanding coordinates.

Let $M_t$ denote a hypersurface of constant $t$ and $g$ its induced metric. If we denote the second fundamental form of $M_t$ by $k$, following our convention \eqref{eq: defi k}, 
a straightforward computation shows umbilicity:
\begin{equation}
k=\l g.
\end{equation}

Consider a large coordinate sphere $\S_R = \lbrace r=R \rbrace$. Let $H$ denote its mean curvature within $M_t$ and $H_0$ its mean curvature within Euclidean space via the Weyl embedding theorem. Assume $R$ is chosen such that  
\begin{equation}\label{eq: ass SchwDeS}
    H = \frac{2}{R}\sqrt{1-\frac{2m}{R}} > 2\l , \qquad H_0 =\frac{2}{R} > 2\l.
\end{equation}
We have

\begin{equation}
\begin{split}
    E_\lambda (\S_R) &= \frac{R^2}{2} \left( \sqrt{\frac{4}{R^2} - 4\lambda^2} \,-\, \sqrt{\frac{4}{R^2}\left(  1 - \frac{2m}{R} \right) - 4\lambda^2} \right) \\
    &> E_{\BY}(\S_R) \\
    &> 0.
\end{split}
\end{equation}
Therefore, the energy is positive for all the values of $\l$ for which it is well defined.

Note now that since $K = \frac{1}{R^2}$,
\begin{equation}
\begin{split}
\frac{E_{\LY}(\S_R)}{R^2} &= \frac{1}{R} \left(1 \,-\, \sqrt{  1 - \frac{2m}{R} - \lambda^2 R^2 } \right) <   \sqrt{K_{\min}}.
\end{split}
\end{equation}
Hence,

\begin{equation}\label{eq: alpha in SDS}
    \alpha_\lambda = 
        \min \left\lbrace \sqrt{\l^2 + \frac{2m}{R^3} } , \frac{2\sqrt6}{3\pi R}  \right\rbrace .
    \end{equation}

Note that the second expression in the definition of $\alpha_\lambda$ was chosen to guarantee, via Lemma \ref{lem:embedding-in-cosmo}, that the embedding of $\Sigma_R$ lies inside the cosmological horizon $r = \tfrac{1}{\l}$. This requirement, however, is not sharp: in the present setting it suffices to have $R \leq \frac{1}{\lambda}$, which is ensured by \eqref{eq: ass SchwDeS}. Thus only the first expression in \ref{eq: alpha in SDS} needs to be considered. Clearly,
\begin{equation}
    \l \leq \sqrt{\l^2 + \frac{2m}{R^3}. } 
\end{equation}
Therefore $E_\lambda > 0$ for all $\lambda$ whenever \eqref{eq: ass SchwDeS} holds.

The results obtained in this section for the exact Schwarzschild de Sitter case, pose a physically meaningful stability question. That is, whether a similar statement would hold for perturbations of this example. By the above computations, this would appear to be related to an associated stability analysis for the Weyl embedding theorem, a problem which could be of interest on its own right.

\subsection{Other $\Lambda$-vacuum examples}\label{subsect: SchoenEx}
\:
\medskip

In the previous section, we applied Corollary \ref{cor: riem} to the  Schwarzschild--de Sitter space, which is one example of a $\Lambda$-vacuum spacetime. In this section we show that there are many $\Lambda$-vacuum spacetimes where Corollary \ref{cor: riem} applies.

Let $V$ be a closed Riemannian manifold with positive scalar curvature. Fix $p \in V$ and set $M := V \setminus \{p\}$. The Green's function at $p$ for the conformal Laplacian on $V$ exists and is strictly positive. Consequently there is a scalar flat metric $g$ on $M$ which is asymptotically flat with $p$ representing the point at infinity, see e.g., \cite{Schoen1989,LeeParker1987}. Since $R_g = 0$, the initial data set $(M,g, k = \lambda g)$ satisfies the $\Lambda$-vacuum Einstein constraint equations with $\Lambda = 3\lambda^2$. In the asymptotically flat coordinates $x, y,z$, let $\Sigma_r$ be the sphere given by $r^2 = x^2 + y^2 + z^2$. For $r$ large enough, $\Sigma_r$ has positive Gauss and mean curvatures and bounds a compact region $\Omega_r$. For this sufficiently large $r$, let $\lambda_{\rm max}$ be given by Corollary \ref{cor: riem}. Then for all $\lambda \in [0, \lambda_{\rm max}]$,  $E_{\lambda}$ of $(\Omega_r, g, k = \lambda g)$ is well defined and nonnegative.

\section{Discussion and conclusion}\label{sect: conclusions}

\subsection{Dependence on the embedding}\label{sect: dep on embed}
\:
\medskip

We recall that the energy $E_\lambda$ depends crucially on the choice of isometric embedding used to define the reference term $\vec{H}_0$ (cf. \eqref{eq: defi mcv}), and hence on the choice of foliation of the reference de Sitter space. In our construction, we use the Weyl embedding theorem to embed $\Sigma$ isometrically into $\mathbb{R}^3$, which in turn sits as a umbilical hypersurface in de Sitter space. Had we chosen a different foliation, the associated reference embedding would change, and consequently the value of the energy $E_\lambda$ would, in general, be different.
 A similar phenomenon already appears in the Liu--Yau quasi-local energy, as clearly illustrated by the examples of O'Murchadha, Szabados, and Tod~\cite{OMurchadhaSzabadosTod2003}. They considered spacelike $2$-surfaces lying on null cones in flat Minkowski space and showed that, despite the fact that the ambient spacetime is globally flat, the Liu--Yau energy of such surfaces can be strictly positive. These examples demonstrate that the value of the quasi-local energy is sensitive to the chosen isometric embedding, rather than being determined solely by the intrinsic geometry of the surface. The same mechanism applies in our setting, as we now describe.

We follow the construction in~\cite{CanovasDeLaFuentePalomo2021}. Let $\Psi \colon \Sigma \to \mathcal{L}_{\mathscr{M}_0}$ be an immersion, where $\Sigma$ is a surface as in Theorem~\ref{thm: main} and $\mathcal{L}_{\mathscr{M}_0}$ denotes the light cone of a point in de~Sitter space with cosmological constant $\Lambda = 3\lambda^2$. By~\cite[Corollary~4.2]{CanovasDeLaFuentePalomo2021}, the Gauss curvature $K$ of $\Sigma$ is given by
\begin{equation}\label{eq:CFP}
    K = \frac14 \lvert \vec{H}_\Psi \rvert^2 + \lambda^2,
\end{equation}
where $\vec{H}_\Psi$ denotes the mean curvature vector of the image $\Psi(\Sigma)$.

On the other hand, $\Sigma$ can also be isometrically embedded into $\mathbb{R}^3$ via the Weyl embedding. Combining \eqref{eq:CFP} with the Gauss equation (see \eqref{eq: H_0 bound}), we obtain 
\begin{equation}\label{eq:different emb}
    H_0^2 - 4\lambda^2 - \lvert \vec{H}_\Psi \rvert^2
    = 2 \lvert \tf{A_0} \rvert^2 \ge 0,
\end{equation}
where $H_0$ and {$\tf{A}_0$} denote the mean curvature and trace-free part of the second fundamental form of the Weyl embedding, respectively. Note that equality holds if and only if $\Sigma$ is round \cite[Proposition 4.36]{ONeill1983}. This example makes explicit that the quasi-local energy $E_\lambda$ depends on the choice of isometric embedding used to define the reference term: replacing $\lvert H_0 \rvert$ in \eqref{eq: defi mcv} with $\lvert H_\Psi \rvert$ leads, in general, to a different value of $E_\lambda$. Choosing a spacelike graph of a sphere in the light cone, as in \cite{OMurchadhaSzabadosTod2003}, shows that,  one may obtain examples in which the value of the energy is strictly positive, specifically whenever the graph is not round.

These considerations suggest that a further promising direction is the development of an optimal embedding formulation. In the case of Minkowski space, Wang and Yau~\cite{WangYau1,WangYau2} introduced a variational quasi-local energy defined by minimizing over admissible isometric embeddings into Minkowski space. Constructing an analogous variational framework in the de~Sitter setting would provide a natural and geometrically consistent extension of our approach.

\subsection{Conclusion}\:

\medskip

In this work, we addressed the problem of defining energy within the framework of an expanding initial data set with a positive cosmological constant. In particular,  the presence of a cosmological horizon in the background de~Sitter space constitutes an obstruction to a global notion of energy, thereby motivating us to use quasi-local definitions instead. 
Inspired from the Liu--Yau construction in the asymptotically flat setting, we adapted their definition to the flat-expanding patch of de Sitter space:
\[
E_\lambda(\Sigma)
    = \frac{1}{8\pi} \int_{\Sigma}
      \left(
        \sqrt{H_0^2 - 4\lambda^2}
        - \sqrt{H^2 - (\operatorname{tr}_{\Sigma} K)^2}
      \right)
      d\mu,
\]
where $\lambda = \sqrt{\Lambda/3}$ denotes the expansion parameter associated with the cosmological constant $\Lambda > 0$. We established sufficient conditions guaranteeing that the isometric embedding of $\Sigma$ lies entirely within the cosmological horizon $r < 1/\lambda$, and proved positivity of $E_\lambda$ for a broad class of initial data sets, including the perfectly umbilic case, $k = \lambda g$, for all sufficiently small $\lambda$.

Several open questions naturally arise from this framework. A first one concerns rigidity: by analogy with Theorem \ref{thm: LY}, one expects that the vanishing of the quasi-local energy should should imply that the data is isometric to a domain in the flat-expanding patch of de~Sitter space. Formally, one may conjecture that
\[
E_\lambda(\Sigma) = 0 
\quad \Longrightarrow \quad 
(\Omega,g,k) \text{ is a domain in the flat expanding de~Sitter spacetime.}
\]
So far, we have obtained only a partial rigidity result: in the rigid case, the boundary $\Sigma = \partial \Omega$ must be round, and this occurs precisely at the critical value $\lambda = \alpha_\lambda$. It is an open question whether this rigidity can be upgraded to the full rigidity conjecture stated above.

Another open problem concerns the dependence of the quasi-local energy on the parameter $\lambda$. As discussed in Theorem~\ref{thm: main}, our result can be interpreted as identifying a range of admissible values of $\lambda$ for a fixed initial data set $(\Omega, g, k)$. It would be interesting to further investigate how the parameter $\lambda$ interacts with the intrinsic and extrinsic geometry of $\Sigma$.

\section*{Declarations}
\subsection*{Conflict of interest} The authors state that there is no Conflict of interest. No data was collected or analysed as part of this project.
\subsection*{Open Access} This article is licensed under a Creative Commons Attribution 4.0 International License, which
permits use, sharing, adaptation, distribution and reproduction in any medium or format, as long as you give appropriate credit to the original author(s) and the source, provide a link to the Creative Commons licence, and indicate if changes were made. The images or other third party material in this article are included in the article’s Creative Commons licence, unless indicated otherwise in a credit line to the material. If material is not included in the article’s Creative Commons licence and your intended use is not permitted by statutory regulation or exceeds the permitted use, you will need to obtain permission directly from the copyright holder.
To view a copy of this licence, visit http://creativecommons.org/licenses/by/4.0/.

\section*{Acknowledgments}
The research of EL was supported by Carlsberg Foundation CF21-0680 and, together with AP, by Danmarks Grundforskningsfond CPH-GEOTOP-DNRF151. RA would like to thank the Alexander von Humboldt foundation for partial financial support. 
This project took shape while EL and AP were participants in the program 
\emph{``New Frontiers in Curvature: Flows, General Relativity, Minimal Submanifolds, and Symmetry''} 
at the Simons Laufer Mathematical Sciences Institute (SLMath), supported by Grant No.~DMS-1928930 during the Fall~2024 semester. 
We are grateful for the many enlightening discussions with experts during the program, and in particular to Dan~Lee for his insightful conversations. 
EL and AP also acknowledge the support of the Simons Center for Geometry and Physics, where parts of this work were carried out during the program 
\emph{``Geometry and Convergence in Mathematical General Relativity''}. 
Lastly, we are deeply grateful to Pengzi Miao for valuable discussions and suggestions, especially regarding Lemma~\ref{lem:embedding-in-cosmo}.

\bibliographystyle{alpha}
\bibliography{bibliography}

\end{document}